\newcommand\blfootnote[1]{%
  \begingroup
  \renewcommand\thefootnote{}\footnote{#1}%
  \addtocounter{footnote}{-1}%
  \endgroup
}
\newtheorem{theorem}{Theorem}
\newif\iflong
\newif\ifshort
\title{A Structural Complexity Analysis of Hierarchical Task Network Planning}
\author{
Cornelius Brand$^1$
\and
Robert Ganian$^2$
\and
Fion Mc Inerney$^3$
\and
Simon Wietheger$^2$\\
\affiliations
$^1$Algorithms \& Complexity Theory Group, Regensburg University, Germany\\
$^2$Algorithms and Complexity Group, TU Wien, Austria\\
$^3$Telefónica Scientific Research, Barcelona, Spain\\
\emails
cornelius.brand@ur.de, \{rganian, fmcinern\}@gmail.com, swietheger@ac.tuwien.ac.at
}
\newtheorem{lemma}[theorem]{Lemma}
\theoremstyle{definition}
\newtheorem{definition}{Definition}
\newtheorem{observation}{Observation}
\newtheorem*{claim*}{Claim}
\newtheorem{claim}{Claim}
\newcommand{\bigO}[1]{{\ensuremath{\mathcal{O}\!\left(#1\right)}}\xspace}
\newcommand{\N}{\mathbb{N}}
\newcommand{\C}{{\cal C}}
\newcommand{\G}{{\cal G}}
\newcommand{\I}{{\cal I}}
\newcommand{\NP}{{\normalfont\textsf{NP}}\xspace}
\newcommand{\W}{{\normalfont\textsf{W}}}
\newcommand{\FPT}{{\normalfont\textsf{FPT}}\xspace}
\newcommand{\calS}{{\cal S}}
\newcommand{\set}[1]{{\{#1\}}}
\newcommand{\bigoh}{\mathcal{O}}
\newcommand{\poly}{\text{P}\xspace}
\newcommand{\np}{\text{NP}\xspace}
\newcommand{\nphard}{\NP-hard\xspace}
\newcommand{\npcomplete}{\NP-complete\xspace}
\newcommand{\PisNP}{\(\poly =\np\)\xspace}
\newcommand{\pverify}{{\sc Plan Verification}\xspace}
\newcommand{\ptarget}{{\sc State Reachability}\xspace}
\newcommand{\pexist}{{\sc Plan Existence}\xspace}
\newcommand{\pcover}{{\sc Action Executability}\xspace}
\newcommand{\compn}{{\ensuremath{C_\#}}\xspace}
\newcommand{\comps}{{\ensuremath{C_s}}\xspace}
\newcommand{\compd}{{\ensuremath{C_d}}\xspace}
\newcommand{\compc}{{\ensuremath{C_c}}\xspace}
\newcommand{\compcdns}{\compc,\compd,\compn,\comps}
\newcommand{\compdns}{\compd,\compn,\comps}
\DeclareMathOperator{\pre}{prec}
\DeclareMathOperator{\del}{del}
\DeclareMathOperator{\add}{add}
\newcommand{\True}{TRUE\xspace}
\newcommand{\False}{FALSE\xspace}
\newcommand{\problem}[3]{%
      \noindent
      \begin{center}
      \begin{tcolorbox}[title=~#1,left=-1.5mm,top=0mm,bottom=0mm,right=0mm,boxsep=1mm]
        \begin{tabular}{p{.15\textwidth}p{.8\textwidth}}
            \textbf{Input:} & \parbox[t]{.76\textwidth}{#2}\\[15pt]
            \textbf{Question:} & \parbox[t]{.76\textwidth}{#3}\\
        \end{tabular}
      \end{tcolorbox}
      \end{center}     
      }
\definecolor{cb_orange}{rgb}{0.961, 0.475, 0.227}
\definecolor{cb_purple}{rgb}{0.663, 0.353, 0.631}
\definecolor{cb_cyan}{rgb}{0.522, 0.753, 0.976}
\definecolor{cb_blue}{rgb}{0.059, 0.125, 0.502}
\newcommand{\citey}[1]{\shortcite{#1}}
\begin{document}

\maketitle

\begin{abstract}
    We perform a refined complexity-theoretic analysis of three classical problems in the context of Hierarchical Task Network Planning: the verification of a provided plan, whether an executable plan exists, and whether a given state can be reached. Our focus lies on identifying structural properties which yield tractability. We obtain new polynomial algorithms for all three problems on a natural class of primitive networks, along with corresponding lower bounds. We also obtain an algorithmic meta-theorem for lifting polynomial-time solvability from primitive to general task networks, and prove that its preconditions are tight.
    Finally, we analyze the parameterized complexity of the three problems. 
\blfootnote{In line with IJCAI standards and guidelines, full details for some proofs are deferred to the technical appendix.}
\end{abstract}

\section{Introduction}
Automated Planning is a core research topic whose goal is to devise computational methods capable of finding solutions to prominent planning tasks. In this work, we consider automated planning in the context of {\it Hierarchical Task Networks (HTNs)}, which have received significant attention in the artificial intelligence research community.
HTNs are capable of incorporating the compound, hierarchical structure of real-world tasks, as opposed to the fine-grained view of classical planning that operates directly on the state space of the system under consideration, and
are one of the best-established planning formalisms in the  literature~\cite{TsunetoEHN96,LiKY09,SohrabiBM09,Geier_Bercher_2011,Alford2015TightBF,XiaoHPWS17,BehnkeHB19,HollerBBB19,Lin_Bercher_2023}.

On a high level,
HTNs can be informally described and illustrated as follows:
at the top level of the task hierarchy is a set of tasks that is to be executed in a (partially specified) order. For instance, the task of going to work may be decomposed into three subtasks: leaving the house, going the actual distance, and getting set up at the office. Now, each of these tasks can in turn be decomposed into a set of smaller tasks,
like climbing the stairs or taking a bus, which again can be decomposed, and so forth. 
This continues until we reach an elementary level of tasks that are no longer decomposable,
which correspond to single executable \emph{actions}.

The crucial feature of the model is that it separates the user-facing description of a task (what we want to do) from its internal implementation (how to do it). For instance, leaving the house needs to be done \emph{before} going the actual distance, regardless of how it is done.
However, in an HTN, we may have {\it several} options of how to decompose this task: the agent can, e.g., either climb the stairs or take the elevator. 

Perhaps the most natural problem associated with HTNs is \pverify: can the tasks in a given network be decomposed and then ordered to match a given, executable sequence of actions?
A second fundamental problem is \pexist: can the tasks in a given network be decomposed and then ordered in any way that yields an executable sequence of actions?
The literature has investigated the complexity of several variants of the aforementioned problems~\cite{Erol_Hendler_Nau_1996,Geier_Bercher_2011,Alford2015TightBF,Behnke_Hoeller_Biundo_2015,Lin_Bercher_2023}.
Moreover, we also consider the question of whether a specified target state can be reached, formalized as the \ptarget\ problem.
This is the typical goal in classical planning \cite{Fikes_Nilsson_1971,Chapman_1987,Backstrom_Nebel_1995}, and has also been considered in the HTN setting~\cite{Hoeller_Bercher_2021,Olz_Biundo_Bercher_2021}.

It is well-established that problems on HTNs such as those listed above
remain computationally intractable even in severely restricted settings.
These results usually depend on restricting the decompositions, e.g., requiring acyclicity \cite{Alford_Bercher_Aha_2015,Chen_Bercher_2021}, restricting the impact of actions on the state space (e.g., delete-relaxed HTN planning)~\cite{Alford_Shivashankar_Kuter_Nau_2014,Hoeller_Bercher_Behnke_2020}, or by allowing the insertion of tasks at arbitrary positions \cite{Geier_Bercher_2011,Alford2015TightBF}.

In this article, we take a different approach and instead focus on structural restrictions of the task network.
While there are some complexity results known for the extreme cases of a total ordering or no ordering of the tasks \cite{Geier_Bercher_2011,Alford_Shivashankar_Kuter_Nau_2014,Behnke_Speck_2021}, other restrictions received little to no attention so far.
Building on these first results, we identify interesting and natural special cases where pockets of polynomial-time solvability show themselves. Moreover, we prove that these pockets give rise to a surprisingly rich complexity-theoretic landscape. 

\paragraph*{Our Contributions.}
We first consider \emph{primitive} task networks, i.e., HTNs in which there are no compound tasks. While this may seem restrictive, it is a natural place to start from a complexity-theoretic perspective: all three of the considered problems remain \NP-hard on primitive task networks. 
In fact, \pverify is known to be \nphard even if the network forms a collection of disjoint stars \cite{Lin_Bercher_2023}.
Moreover, unlike \pverify, the latter two problems remain \NP-hard even on trivial (in particular, edgeless) networks when the state space is not of fixed size \cite{Bylander94,Erol_Hendler_Nau_1996}.
Thus, in line with previous related works, we henceforth consider \pexist\ and \ptarget\ exclusively in the setting of constant-sized state spaces~\cite{Kronegger_Pfandler_Pichler_2013,Backstrom_Jonsson_Ordyniak_Szeider_2015}.
Even in this setting, we establish the \NP-hardness of the three considered problems on networks consisting of a collection of independent chains.

While these initial results may seem discouraging, it turns out that all three problems are polynomial-time solvable on networks which are chains (i.e., totally-ordered networks)~\cite{Erol_Hendler_Nau_1996,Behnke_Hoeller_Biundo_2015} or antichains (\cref{thm:reachtarget_fpt_empty,thm:includeaction_fpt_empty}).
We generalize and unify these tractability results by showing that all three considered problems are polynomial-time solvable on networks of bounded ``generalized partial order width'', that is, the partial order width of the network when ignoring isolated elements (\Cref{thm:planverif_powidth_xp,thm:target_cover_powidth_xp}). 

Building on the above results for the primitive case, in the second part of the paper, we establish an \emph{algorithmic meta-theorem} that allows us to precisely describe the conditions under which polynomial-time tractability
for primitive task networks can be lifted to the compound case (\Cref{thm:meta_for_compounds}).  
Crucially, we prove that this result is tight in the sense that all of the required conditions are \emph{necessary} (\Cref{thm:meta_theorem_lowerbound}).
In the final part of the paper, we push beyond classical complexity and analyze the considered problems from the perspective of the more fine-grained \emph{parameterized complexity} paradigm~\cite{CyganFKLMPPS15}. There, one analyzes problems not only w.r.t.\ the input size $n$, but also w.r.t.\ a specified parameter $k\in \N$, where the aim is to obtain algorithms which run in time at most $f(k)\cdot n^{\bigoh(1)}$ for some computable function $f$ (so-called \emph{fixed-parameter algorithms}); problems admitting such algorithms are called \emph{fixed-parameter tractable} (\FPT). 

The natural question from the parameterized perspective is whether the aforementioned tractability results for HTNs with constant-sized structural measures (\Cref{thm:planverif_powidth_xp,thm:target_cover_powidth_xp}) can be lifted to fixed-parameter tractability when parameterized by the same measures. While analogous questions have been thoroughly explored---and often answered in the affirmative---in many other areas of artificial intelligence~\cite{EibenGKOS21,GanianK21,FroeseKN22,HeegerHMMNS21},
here we prove that the generalized partial order width does not yield fixed-parameter tractability for any of our problems on primitive task networks, under well-established complexity assumptions. Moreover, 
the aforementioned hardness of the problems on forests of stars already rules out fixed-parameter tractability for essentially all standard structural graph parameters, except for the \emph{vertex cover number} (vcn)~\cite{BlazejGKPSS23,Icalpvc,FionnColt}. We complement these lower bounds by designing highly non-trivial fixed-parameter algorithms for all three considered problems on primitive task networks w.r.t.\ the vertex cover number of the network. For compound HTNs, we then adapt our general algorithmic meta-theorem for polynomial-time solvability to the fixed-parameter paradigm (\Cref{thm:meta_for_compounds_fpt}), allowing us to also extend these results to the non-primitive setting.
See \cref{fig:mindmap} for a schematic overview of our results.
\begin{figure}
    \tikzstyle{block} = [rectangle, draw, fill=white!20, 
        text width=8.1em, text centered, rounded corners, minimum height=1em]
    \tikzstyle{line} = [draw, -latex']
    \tikzstyle{cloud} = [draw, text centered, rectangle,fill=gray!10, node distance=0cm, text width=6.5em,
        minimum height=0.0em]
    \tikzstyle{cloudhard} = [draw, text centered, rectangle,fill=gray!45, node distance=0cm, text width=6.5em,
        minimum height=0.0em]
        \centering
        \begin{tikzpicture}[node distance =0cm, auto]
        \small
        \centering
        %
        %
        \node [block, fill=black!20] (hardcases) {\scriptsize Hardness on Tree-like Graphs (\ref{thm:planverif_chains_hard}, \ref{thm:target_cover_chains_hard})};
        %
        %
        \node [block, right of=hardcases, node distance=2.85cm, fill=black!20] (chainAntichain) {\scriptsize Tractability on (Anti)-Chains (\ref{thm:reachtarget_fpt_empty}, \ref{thm:includeaction_fpt_empty})};
        %
        %
         \node [block, below of=chainAntichain, xshift=-2.3cm, node distance=1.2cm,, fill=black!20] (algoPOW) {\scriptsize P for fixed gpow (\ref{thm:planverif_powidth_xp},\ref{thm:target_cover_powidth_xp})};
         %
         %
         \node [block, right of=algoPOW, node distance=5.25cm] (meta) {\scriptsize Meta-Theorem (\ref{thm:meta_for_compounds})};
         %
         %
         \node [block, right of=hardcases, node distance=5.8cm] (lowerboundmeta) {\scriptsize Lower Bound for Meta-Theorem (\ref{thm:meta_theorem_lowerbound})};
         %
         %
         \node [block, below of=algoPOW, node distance=1.4cm, fill=black!20] (primVCN) {\scriptsize \FPT w.r.t.\ vcn (\ref{thm:planverif_vertexcover}, \ref{thm:reachtarget_includeaction_fpt})};
         %
         %
         \node [cloud, right of=algoPOW, node distance=2.625cm, yshift=-0.7cm,fill=black!0,text width=6.9em] (compoundAlgos) {\scriptsize Algorithms for Compound Networks};
        %
        %
         \node [block, right of=primVCN, node distance=5.25cm] (meta_fpt) {\scriptsize \FPT\ Meta-Theorem (\ref{thm:meta_for_compounds_fpt})};
            
            \draw[<->] (meta) -- (lowerboundmeta);

            \draw[->] (algoPOW) -| (compoundAlgos);
            \draw[->] (meta) -| (compoundAlgos);
           
            \draw[->] (primVCN) -| (compoundAlgos);
            \draw[->] (meta_fpt) -| (compoundAlgos);

            \draw[->] (chainAntichain.south) -- +(0cm,-0.205cm) -| node [below, xshift=.5cm] {\tiny motivate}(algoPOW.north);
            \draw[->] (hardcases.south) -- +(0cm,-0.2cm) -| (algoPOW.north);
            \draw[->] ([xshift=-1cm]hardcases.south) |-  node [rotate=-90, xshift=-1.2cm, yshift=-0.4cm] {\tiny motivates} (primVCN.west);
       
        \end{tikzpicture}
    \caption{Overview of our results and respective theorems. Generalized Partial Order Width is abbreviated to \emph{gpow}, and Vertex Cover Number to \emph{vcn}. Results in gray boxes refer to primitive instances.} 
    \label{fig:mindmap}
    \end{figure}

\paragraph*{Related Work.}
HTN planning has been extensively studied in artificial intelligence research and has applications in areas such as 
healthcare~\cite{Gonzalez-FerrerTFM13}, business processes~\cite{Gonzalez-FerrerFC13}, human-robot interaction~\cite{HayesS16}, 
emergency response~\cite{LiuWQZW16,ZhaoQL17}, and visualization~\cite{PadiaBH19}. 
For an overview of the current state of HTN planning, see the works by Bercher, Alford, and Hoeller~\citey{Bercher_Alford_Hoeller_2019} and  Georgievski and Aiello~\citey{Georgievski_Aiello_2015}.

Questions on the complexity and even computability of many tasks associated with HTNs have been treated in the literature~\cite{ErolHN94,Erol_Hendler_Nau_1996,Geier_Bercher_2011,Alford_Bercher_Aha_2015,Olz_Biundo_Bercher_2021,Lin_Bercher_2023,LinHB24,LinOHB24}.
Albeit carried out in a different context, the work of Kronegger, Pfandler, and Pichler~\citey{Kronegger_Pfandler_Pichler_2013} on propositional STRIPS with negation (PSN) studies a problem ($k$-PSN) which is related to \ptarget. The main difference is that while PSN includes a broader definition of actions, it does not consider any primitive or hierarchical network structure.
Finally, the computational complexity of planning has also been studied in settings beyond HTNs~\cite{Bylander94,BackstromCJOS12,BackstromJOS13,Backstrom_Jonsson_Ordyniak_Szeider_2015}.

\section{Preliminaries}
\label{sec:prelims}
For $n\in \mathbb{N}$, 
let \([n]=\set{1,\ldots,n}\). We use standard graph terminology~\cite{Diestel}. 
The \emph{partial order width} $w$ of a partially ordered set $\lesssim$ is  
the maximum number of pairwise-disjoint maximal chains contained in $\lesssim$ or, equivalently, the size of its largest antichain. An element in $\lesssim$ is \emph{isolated} if it is incomparable to each other element in the set. We define the \emph{generalized partial order width} of a partially ordered set $\lesssim$ as the partial order width of $\lesssim$ after removing isolated elements.

\paragraph*{Hierarchical Task Networks.}
Our formalization of HTNs follows the terminology employed in recent works~\cite{Olz_Biundo_Bercher_2021,Lin_Bercher_2023}.
The {\it domain} of a planning problem is a tuple \((F, A, \C, \delta, M)\),
where \(F\) is a finite set of propositions, \(A\) is a set of {\it actions} (or {\it primitive task names}), and \(\C\) is a set of {\it compound task names} such that \(A\cap \C = \emptyset\).
The function \(\delta \colon A \rightarrow 2^F\times 2^F\times 2^F\) maps each action to a set \(\pre(a)\) of preconditions, a set \(\del(a)\) of propositions to be removed, and a set \(\add(a)\) of propositions to be added to the current state.
An action \(a\) is {\it executable} in a state \(s\subseteq F\) if \(\pre(a)\subseteq s\).
If executed, it changes the state \(s\) to \(s' = (s\setminus \del(a)) \cup \add(a)\).
A \emph{plan} is a sequence of actions and is executable in a state $s$ if its actions can be executed consecutively starting from \(s\).
Lastly, $M$ represents the \emph{decomposition methods} and maps each $c\in \C$ to a set of task networks (defined below) which $c$ can be decomposed into.

A {\it task network} in such a domain is a tuple \((T, \prec^+, \alpha)\) consisting of a set \(T\) of task identifiers (or simply \emph{tasks}) with a partial order \(\prec^+\ \subseteq T\times T\) and a function \(\alpha\colon T \rightarrow A\cup \C\).

The usual representation of a task network is its cover graph  \(D_\prec=(T,\prec)\), where $\prec$ is the cover of \(\prec^+\) (in particular, $(a,b)\in \prec$ if and only if $a \prec^+ b$ and there is no $c$ such that $a \prec^+ c \prec^+ b$). We denote the underlying undirected graph by $G_\prec$.
\cref{fig:example_network} provides an example of a task network.

\begin{figure}
    \centering
    \begin{minipage}{.18\textwidth}
        \includegraphics[width=.8\textwidth]{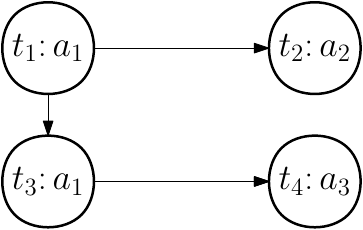}    
    \end{minipage}
    \hfill
    \begin{minipage}{.27\textwidth}
        \begin{tabular}{|c|c|c|c|} 
            \hline
            Action & $\pre$ & $\del$ & $\add$ \\ 
            \hline
            $a_1$ & $\emptyset$ & $\emptyset$ & $\set{1}$ \\ 
            \hline
            $a_2$ & $\emptyset$ & $\emptyset$ & $\set{2}$ \\ 
            \hline
            $a_3$ & $\set{2} $ & $\set{1}$ & $\emptyset$ \\ 
            \hline
            \end{tabular}
    \end{minipage}
  \caption{The digraph $D_\prec$ for a primitive HTN with tasks $T=\set{t_1,t_2,t_3,t_4}$, actions $A=\set{a_1,a_2,a_3}$, propositions $F=\{1,2\}$, and initial state \(s_0 = \emptyset\). 
  A task's action is in its node.
  There is an arc from a task $t$ to a task $t'$ if $(t,t')\in\prec$.
  Here, the only sequences that execute all tasks are $t_1,t_2,t_3,t_4$ and $t_1,t_3,t_2,t_4$.
  }
  \label{fig:example_network}     
\end{figure}
A subset \(T'\) of tasks is a {\it chain} if \(\prec^+\) defines a total order on \(T'\), and an {\it antichain} if all of its tasks are pairwise incomparable.
A task \(t\) is {\it primitive} if \(\alpha(t)\in A\), and {\it compound} otherwise.
A task network is {\it primitive} if it contains no compound tasks.
For a primitive task network, a total order $\bar{t}$ on \(T\) respecting \(\prec\) is called a {\it linearization} and we denote the corresponding action sequence by $\alpha(\bar{t})$.
Non-primitive task networks can be decomposed into primitive ones via the methods in \(M\); decomposing a task \(t\) into a subnetwork \(tn_m\in M(t)\) means replacing \(t\) by \(tn_m\) such that each task in \(tn_m\) is connected to tasks outside \(tn_m\) in the same way that \(t\) was.

The input of an {\it HTN problem} contains a domain \(D=(F, A, \C, \delta, M)\), an initial task network \(tn\) in that domain, and an initial state \(s_0 \subseteq F\).\footnote{
    Some works refer to the triple $(D, tn, s_0)$ as an \emph{HTN problem}, while we use that term for the considered algorithmic problems.}
In a generalization of the literature, we consider a primitive task network $tn' = (T', {\prec^+}', \alpha')$ to be a \emph{solution} to a triple $(D,tn,s_0)$ if $tn'$ has a linearization~$\bar{t}$ such that $\alpha'(\bar{t})$ is executable in $s_0$ and $tn$ decomposes into a primitive task network $(T^*, {\prec^+}^*, \alpha^*)$ with
 $T'\subseteq T^*$,
  for each $(t,t')\in {\prec^+}^*$ with $t'\in T'$ we have that $t\in T'$ and $(t,t')\in {\prec^+}'$,
   and $\alpha'(t)=\alpha^*(t)$ for all $t\in T'$.
The literature usually considers the case where $tn$ can be decomposed into $tn'$, that is, $T^*=T'$, in which we call $tn'$ a \emph{full solution}.
 Intuitively, non-full solutions respect $\delta$ and $\prec^+$, but do not need to execute all tasks.
We now define our problems of interest.
%
      \begin{center}
      \begin{tcolorbox}[title=~\pverify,left=-1.5mm,top=0mm,bottom=0mm,right=0mm,boxsep=1mm]
        \begin{tabular}{p{.15\textwidth}p{.8\textwidth}}
            \textbf{Input:} & \parbox[t]{.76\textwidth}{A task network \(tn\) in a domain $D$,
            an initial state \(s_0\subseteq F\), and a plan \(p\) over \(A\).}\\[15pt]
            \textbf{Question:} & \parbox[t]{.76\textwidth}{Is there a full solution to $(D, tn, s_0)$ with a linearization $\bar{t}$ such that $\alpha(\bar{t})=p$?}\\
        \end{tabular}
      \end{tcolorbox}
      \end{center}     

\problem{\pexist}%
{A task network \(tn\) in a domain $D$ and an initial state \(s_0\subseteq F\).}%
{Is there a full solution to $(D, tn, s_0)$?}

In addition to the problems mentioned in the introduction, we also define the corresponding problem of constructing a plan executing (at least) a specified set of actions. This \pcover\ problem is closely related to \pexist\ and, in fact, all results that we prove for one of the problems will also hold for the other.

      \noindent
      \begin{center}
      \begin{tcolorbox}[title=~\pcover,left=-1.5mm,top=0mm,bottom=0mm,right=0mm,boxsep=1mm]
        \begin{tabular}{p{.15\textwidth}p{.8\textwidth}}
            \textbf{Input:} & \parbox[t]{.76\textwidth}{A task network \(tn\) in a domain $D$, an initial state \(s_0\subseteq F\), and a multiset \(S\) of actions in~\(A\).}\\[15pt]
            \textbf{Question:} & \parbox[t]{.76\textwidth}{Is there a solution to $(D, tn, s_0)$ with a linearization $\bar{t}$ such that $\alpha(\bar{t})$ covers $S$?}\\
        \end{tabular}
      \end{tcolorbox}
      \end{center}     
\pcover can be seen as a generalization of \pexist: any \pexist instance can be easily reduced to \pcover, e.g., by adding a new task \(t\) with new action \(a\), letting \(S=\set{a}\), and adapting $\prec^+$ or $\delta$ to ensure that all other tasks are executed before $t$.
Such direct reductions can be used to transfer algorithmic and hardness results in the respective directions; in our setting, we establish all of our hardness results for the ``simpler'' \pexist, while all of our algorithms can deal with the ``harder'' \pcover.

      \noindent
      \begin{center}
      \begin{tcolorbox}[title=~\ptarget,left=-1.5mm,top=0mm,bottom=0mm,right=0mm,boxsep=1mm]
        \begin{tabular}{p{.15\textwidth}p{.8\textwidth}}
            \textbf{Input:} & \parbox[t]{.76\textwidth}{A task network \(tn\) in a domain $D$, an initial state \(s_0\subseteq F\), and a target state \(s_g\subseteq F\).}\\[15pt]
            \textbf{Question:} & \parbox[t]{.76\textwidth}{Is there a solution to $(D, tn, s_0)$ with a linearization yielding a state \(s \supseteq s_g\)?}\\
        \end{tabular}
      \end{tcolorbox}
      \end{center}     

While the considered problems are stated in their decision variants
for complexity-theoretic purposes, all of our obtained algorithms are constructive and can output a solution and a corresponding linearization as a witness. 
Further, for primitive networks, all the problems are clearly in \NP. 
In some results, we use the {\it state transition graph}, i.e., the multigraph whose vertices are the states reachable from the initial state~$s_0$ by any sequence of actions, and a transition from one state to another via an action $a$ is represented by an arc labeled $a$.

\section{Primitive Task Networks}\label{sec:primitiveNetworks}
Our first set of results focuses on the classical complexity of the targeted problems on primitive task networks. Before proceeding with our analysis, we make a general observation about the state space in our instances.

\paragraph*{The State Space.}
\label{sec:states}
As a basic precondition for solving our problems on more sophisticated task networks, we need to ensure tractability on trivial primitive task networks where $\prec^+ =\emptyset$ (antichains). 
Unfortunately, all but the first of our problems of interest are intractable even on these degenerate networks, as the hardness of \pexist in STRIPS planning \cite{Bylander94} transfers to HTN planning by an HTN-to-STRIPS translation \cite{AlfordBHBBA16}; see also the related work of Nebel and Bäckström~\citey{Nebel_Backstrom_1994} and Erol et al.~\citey{Erol_Hendler_Nau_1996}. 
As discussed above, the hardness of \pcover follows from the hardness of \pexist.
The hardness of \ptarget follows as well: there is a trivial reduction from \pexist that adds a special task $t_g$ and adapts all actions and $s_g$ such that $s_g$ is only reachable if $t_g$ is executed, and $t_g$ is only executable after all other tasks have been executed.

A requirement for the above reductions
is that the state space (i.e., the number of states in the state transition graph)
can be large, and in particular not bounded by any constant. 
Hence, we hereinafter analyze the affected problems (\ptarget, \pexist, and \pcover) in the bounded state-space setting.

\paragraph*{The Intractability of Tree-like HTNs.}
\label{sub:trees}
We begin with a series of lower bounds which rule out polynomial algorithms even for simple ``tree-like'' networks (unless \PisNP).
\pverify\ is known to be 
\npcomplete\ on primitive networks such that $D_\prec$ consists of disjoint stars, even when \(\pre(a)=\del(a)=\add(a)=\emptyset\) for all actions \(a\in A\)~\cite[Theorem~3]{Lin_Bercher_2023}.
As our first result of this section, we show that our problems remain intractable even if \(\prec^+\) describes a set of total orders that are independent from one another (i.e., $D_\prec$ is a disjoint union of paths).
We reduce from \textsc{Shuffle Product}, which asks, given words \(u,c_1,\ldots,c_w\) over an alphabet \(\Sigma\), to decide if $u$ can be generated by interlacing \(c_1,\ldots,c_w\) while preserving the order of the letters as they appear in each $c_i$, $i\in [w]$. This problem is \nphard even if \(|\Sigma|=2\)~\cite{Warmuth_Haussler_1984}.
\begin{theorem}
\label{thm:planverif_chains_hard}
\pverify is \NP-complete, even when restricted to primitive task networks such that $\prec^+$ is a collection of chains, \(|A|=2\), and \(\pre(a)=\del(a)=\add(a)=\emptyset\) for all actions \(a\in A\).
\end{theorem}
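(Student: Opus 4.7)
The plan is to reduce from \textsc{Shuffle Product} with alphabet size $2$, which the authors have just cited as \nphard. Membership in \NP is immediate, since a linearization $\bar{t}$ of the given network is a polynomial-size certificate that can be checked in polynomial time against the plan $p$ and the (trivial) executability conditions.

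For the hardness direction, I would take an instance $(u, c_1, \ldots, c_w)$ of \textsc{Shuffle Product} over $\Sigma = \{\sigma_1, \sigma_2\}$ and build a primitive HTN instance as follows. Set $F = \emptyset$, $\C = \emptyset$, and $A = \{a_1, a_2\}$, with $\pre(a_j) = \del(a_j) = \add(a_j) = \emptyset$ for $j \in \{1,2\}$, and take $s_0 = \emptyset$. For each word $c_i = c_i[1]\,c_i[2]\cdots c_i[\ell_i]$, introduce a chain of fresh tasks $t_{i,1} \prec t_{i,2} \prec \cdots \prec t_{i,\ell_i}$, setting $\alpha(t_{i,k}) = a_j$ precisely when $c_i[k] = \sigma_j$. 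The partial order $\prec^+$ is the union of these totally ordered chains and is thus itself a collection of chains. Finally, define the target plan $p$ to be the sequence of actions obtained from $u$ under the letter-to-action correspondence $\sigma_j \mapsto a_j$.

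Since all precondition, add, and delete sets are empty, every action is executable in every state, so the executability requirement in the definition of \pverify is trivially met by any linearization. Consequently, the instance is a \textsc{Yes}-instance of \pverify if and only if there exists a linearization $\bar{t}$ of the constructed network with $\alpha(\bar{t}) = p$. A linearization of a disjoint union of chains is exactly an interleaving of the chains that preserves the internal order of each chain, so $\alpha(\bar{t}) = p$ is equivalent to $u$ being a shuffle of $c_1, \ldots, c_w$. This establishes the equivalence of the two instances, and since the construction is clearly polynomial, we obtain \np-hardness. The restrictions stated in the theorem ($\prec^+$ a collection of chains, $|A|=2$, and empty $\pre$, $\del$, $\add$) hold by construction.

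There is no substantial obstacle in this argument; the main thing to be careful about is that the stipulation in the definition of \pverify requires a \emph{full} solution, i.e., a linearization of \emph{all} tasks, and that $\alpha(\bar{t}) = p$ (rather than merely a prefix relation), both of which match the shuffle problem exactly once one observes that $|p| = \sum_i \ell_i$ is forced whenever $u$ is a shuffle of the $c_i$.
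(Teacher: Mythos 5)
Your reduction is exactly the paper's: one chain per word $c_i$ with tasks labeled by the two letter-actions, plan $p = u$, and correctness from the observation that linearizations of a disjoint union of chains are precisely interleavings. The proposal matches the published proof (and adds slightly more explicit detail on NP-membership and the full-solution requirement), so it is correct and takes essentially the same approach.
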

\begin{proof}
    We reduce from \textsc{Shuffle Product} with the alphabet \(\Sigma = \set{a,b}\).
    Let \(A=\Sigma\).
    Construct a task for each letter in the words \(c_1,\ldots,c_w\); let the task \(t_{i,j}\) refer to the \(j^{\text{th}}\) letter of the word \(c_i\), \(i\in [w]\).
    Let \(\prec^+\) be such that the tasks for each word form a chain in the respective order, i.e., let \(\prec\) consist of the edges \((t_{i,j},t_{i,j+1})\), for all \(i\in [w]\) and \(j\in [|c_i|-1]\). 
    Then, \(\prec^+\) partitions \(T\) into \(w\) chains (LEFT side of \cref{fig:reduction_wordshuffle}). 
    Now, set \(\alpha(t_{i,j})\) to the action representing the \(j^{\text{th}}\) letter of \(c_i\).
    Let the input plan \(p\) be such that \(p=u\).
    The correctness of the reduction follows since verifying whether the tasks 
    can be arranged to match the input plan corresponds to interlacing the words to form \(u\). 
\end{proof}

\begin{theorem}
\label{thm:target_cover_chains_hard}
\ptarget and \pexist are \npcomplete, even when restricted to primitive task networks such that $\prec^+$ is a collection of chains, \(|A|\leq 5\), and the state transition graph has at most $4$ states.
\end{theorem}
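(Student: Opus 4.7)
The plan is to reduce from \textsc{Shuffle Product} over the binary alphabet $\{0,1\}$, which is \nphard, building on the strategy from \cref{thm:planverif_chains_hard}. The key twist is that, without an input plan to match against, the matching between the given word $u$ and the shuffle of the $c_i$'s has to be enforced through the (constant-size) state space itself. Membership in \NP\ is immediate from the general observation, made earlier, that all considered problems lie in \NP\ on primitive networks.

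Given an instance $(c_1,\dots,c_w,u)$ of \textsc{Shuffle Product}, I would use propositions $F=\{p_{\mathrm{id}},p_0,p_1,p_{\mathrm{g}}\}$ and actions $A=\{0,1,\hat 0,\hat 1,\mathit{finish}\}$ with the following transitions: from the initial state $\{p_{\mathrm{id}}\}$, action $\hat 0$ (resp.\ $\hat 1$) moves to $\{p_0\}$ (resp.\ $\{p_1\}$), which admits only the literal $0$ (resp.\ $1$) and returns to $\{p_{\mathrm{id}}\}$; from $\{p_{\mathrm{id}}\}$ the action $\mathit{finish}$ moves to $\{p_{\mathrm{g}}\}$. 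The reachable state transition graph thus has exactly four states. The task network is a disjoint collection of chains: one chain per input word $c_i$ spelling out its letters, plus one guide chain with actions $\hat{u_1},\hat{u_2},\dots,\hat{u_{|u|}},\mathit{finish}$. For \ptarget, set $s_g=\{p_{\mathrm{g}}\}$.

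Correctness reduces to the following observation. Every \ptarget solution must execute $\mathit{finish}$ to reach $s_g$, and every full \pexist solution must execute $\mathit{finish}$ because it is a task of the network. Since $\mathit{finish}$ is the last task of the guide chain and solutions are closed under predecessors, the entire guide chain is forced, appearing in its prescribed order. The preconditions then force each $\hat{u_j}$ to be followed immediately by a literal-action equal to $u_j$, so every literal-action used is uniquely assigned to a position of $u$. With $\sum_i|c_i|=|u|$, precisely $|u|$ literal-actions are available and all must be used, and the induced interleaving of the input chains is exactly a shuffle realising $u$. The converse direction is the obvious linearisation.

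The main delicate point I expect is ruling out ``unintended'' executions: no literal can be fired outside its matching expect-state, no guide action can be skipped, and $\mathit{finish}$ cannot be reached without first processing the whole guide chain together with exactly $|u|$ matching literals drawn from the $c_i$-chains. All of these follow from the action preconditions and the structure of the chains, but each point needs an explicit sentence to be fully rigorous. A final routine check confirms $|A|=5$ and that the four listed states are exactly the states reachable from $s_0$.
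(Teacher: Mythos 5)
Your proposal is correct and takes essentially the same approach as the paper: a reduction from \textsc{Shuffle Product} encoding each word as a chain with a goal task appended to the $u$-chain, where a constant-size state automaton forces strict alternation between $u$-positions and letters drawn from the $c_i$-chains, together with the predecessor-closure of solutions and the assumption $\sum_i|c_i|=|u|$ to force all literals to be consumed. The only (cosmetic) difference is that you mirror the roles — your guide ($u$) chain announces the expected letter and the $c_i$-chains consume it, whereas the paper's $c_i$-chains announce via the \textsc{left} actions and the $u$-chain consumes via the \textsc{right} actions.
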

\begin{proof}[Proof Sketch]
    We use a similar reduction as for \cref{thm:planverif_chains_hard}, (see \cref{fig:reduction_wordshuffle}), but encode $u$ as a separate chain and adapt the actions so that executable plans must alternate between LEFT and RIGHT tasks, and executed RIGHT tasks must be preceded by executed LEFT tasks of the same color.
\end{proof}
    \begin{figure}
        \centering
        \includegraphics[width=.32\textwidth]{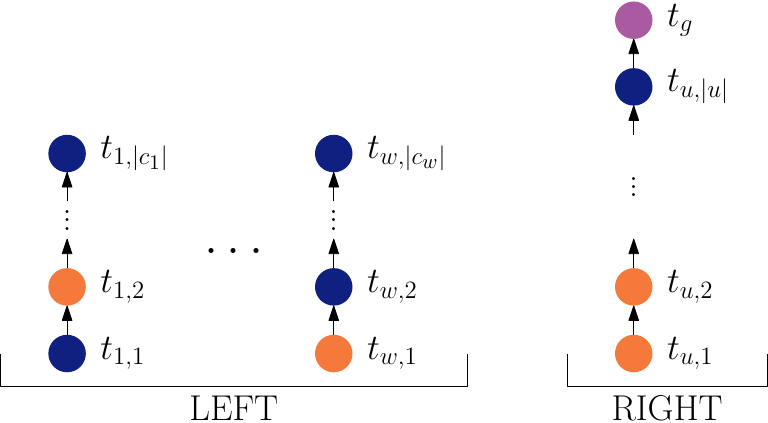}
        \caption{Constructed task network for reductions from \textsc{Shuffle Product} with words \(c_1,\ldots, c_w\) and \(u\).
        Each word is represented by a chain of tasks, where each task represents a letter in that word.
        The two letters in the alphabet are represented in orange and blue.}
        \label{fig:reduction_wordshuffle}     
    \end{figure}

\paragraph*{Chains and Antichains.}\label{sec:prim_chains_antichains}
Given the seemingly discouraging lower bounds obtained so far, one might wonder whether there are natural structural restrictions on the network which can be exploited to obtain polynomial-time algorithms. Here, we chart the route towards identifying these by first establishing the tractability of the considered problems (under the state-space restrictions justified in Sec.~\ref{sec:states}) in the extremes of~\(\prec^+\), where \(T\) is a chain (total order) or antichain (\(\prec^+\ =\emptyset\)).

When $T$ is a primitive chain, it is easy to observe that 
all of our problems are polynomial-time solvable (see, e.g.,~\cite{Erol_Hendler_Nau_1996,Behnke_Hoeller_Biundo_2015}).
\pverify restricted to primitive antichains is also trivial,
as it suffices to check whether the plan is executable in the initial state and the multiset of actions defined by the tasks in the network matches the actions in the plan. 
Handling antichains for the other problems requires more effort, and no tractability results exist that only assume the state space to be bounded. 

\begin{theorem}
\label{thm:reachtarget_fpt_empty}
\ptarget is polynomial-time solvable on primitive antichain task networks with constantly many states in the state transition graph.
\end{theorem}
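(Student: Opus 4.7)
The plan is to recast the question as a constrained walk-existence problem in the state transition graph. Since the task network is an antichain, any subset of tasks may be executed in any order, so a solution exists exactly when there is a walk in the state transition graph starting at $s_0$, ending at some state $s^{*} \supseteq s_{g}$, and using each action $a$ at most $N_a := |\{t \in T : \alpha(t) = a\}|$ times. The crucial leverage is that, with a constant number $k$ of reachable states, the combinatorial ``shape'' of such a walk lives in a constant-size search space, leaving only an arithmetic feasibility question to settle.

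Concretely, I would first compute the reachable state space $V$ from $s_0$, which has constantly many elements. Then, for each candidate endpoint $s^{*} \in V$ with $s^{*} \supseteq s_{g}$, and for each subset $E^{*} \subseteq V \times V$ whose underlying undirected graph is connected and contains $s_0$ and $s^{*}$, I would introduce integer variables $y_{s,s'}$ counting how many times the transition $(s,s')$ is traversed. I impose $y_{s,s'} \geq 1$ on $E^{*}$, $y_{s,s'} = 0$ off $E^{*}$, and flow conservation with a surplus of $+1$ at $s_0$ and $-1$ at $s^{*}$ (or $0$ everywhere if $s_0 = s^{*}$). By the standard Eulerian trail characterization for directed multigraphs, any nonnegative integer vector $y$ satisfying these conditions corresponds to an actual walk from $s_0$ to $s^{*}$.

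The main obstacle is to ensure that the prescribed transition counts can be legally attributed to tasks without exceeding any $N_a$. I would model this as a bipartite transportation problem: actions provide supply $N_a$, transitions $(s,s')$ demand $y_{s,s'}$, and $a$ may serve $(s,s')$ iff $\pre(a) \subseteq s$ and $(s \setminus \del(a)) \cup \add(a) = s'$. By Hall's condition for transportation feasibility (equivalently, max-flow/min-cut), this bipartite problem admits an integer solution if and only if, for every $D \subseteq V \times V$, $\sum_{(s,s') \in D} y_{s,s'} \leq \sum_{a \in A(D)} N_a$, where $A(D)$ denotes the set of actions admitting some transition in $D$. Since $|V|$ is a constant, this adds only constantly many linear inequalities, whose right-hand sides can be precomputed from the task multiplicities.

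For correctness, any valid walk directly yields counts $y$ satisfying all constraints; conversely, any feasible $y$ gives, via the Eulerian trail theorem, a walk whose transitions can be annotated with a legal action assignment certified by Hall's condition, which translates back to a valid HTN solution. For complexity, the resulting integer program has $O(1)$ variables (each bounded by $|T|$) and $O(1)$ constraints, and is therefore solvable in polynomial time, e.g., by Lenstra's algorithm or by direct enumeration of $y$ in $|T|^{O(1)}$ time. As the outer enumeration over $(s^{*}, E^{*})$ contributes only a constant factor, the entire algorithm runs in polynomial time, and the constructive nature of each step allows us to also output a witness linearization.
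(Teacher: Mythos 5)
Your proposal is correct, but it takes a genuinely different route from the paper. The paper's proof leans on the fact that a \ptarget solution need not execute every task: any witnessing walk can have its cycles excised, so it suffices to enumerate the $\bigO{2^k\cdot k^{k-1}}$ \emph{simple} edge paths from $s_0$ to the target in the state transition graph (after retaining only $k$ parallel edges per state pair, which guarantees that a discarded edge can always be replaced by one whose action is used at most once on the path) and to check action availability for each---no flow or integer-programming machinery is needed. You instead characterize \emph{all} admissible walks: you enumerate the constant-size support $E^{*}$, impose degree/flow-conservation constraints so that the Euler-trail theorem for connected directed multigraphs reassembles the edge multiplicities $y$ into an actual walk, and encode the assignment of tasks to traversals as a transportation problem whose feasibility is captured by the $2^{k^2}$ Hall-type inequalities, yielding a constant-dimension integer program with variables bounded by $|T|$. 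This is heavier than necessary for \ptarget, but it buys generality: because your formulation supports lower bounds on action usage just as easily as upper bounds, it essentially subsumes the paper's separate and more involved argument for \pcover on antichains (Theorem~\ref{thm:includeaction_fpt_empty}), which the paper handles by branching over a main path together with a set of simple cycles and solving an ILP for the cycle multiplicities. Two small loose ends, neither fatal: you should treat the degenerate case $s^{*}=s_0$ with $E^{*}=\emptyset$ (accept immediately if $s_0\supseteq s_g$, since the empty walk is a valid solution), and you should state explicitly that each $y_{s,s'}$ is bounded by $|T|$ (every traversal consumes a distinct task), which is what makes the direct enumeration of $y$ run in $|T|^{\bigO{1}}$ time.
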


\begin{proof}
    Let $k$ be the number of states, \(s_0\) the initial state, and \(s_g\) the target state. We first compute the state transition graph~$\G$ of $\I$ in quadratic time. 
    In \(\G\), if there are more than \(k\) edges between any pair of states, select any \(k\) of them and discard the rest. 
    Then, consider all simple edge paths from \(s_0\) to \(s_g\).
    Note that there are \bigO{2^k\cdot k^{k-1}} such paths, as there are \bigO{2^k} simple vertex paths, each of length at most \(k-1\), and each step might choose between up to \(k\) different edges.  
    For each simple edge path, check whether there is any action that is used more times in the path than there are tasks of that action in \(T\).
    If there exists a simple edge path for which the number of required tasks does not exceed the number of available ones, 
then give a positive answer, and otherwise reject.

    If such a path is found, the instance is indeed a yes-instance as there is a task sequence that results in a path to \(s_g\) in \(\G\).
    For the other direction, let there be a task sequence that leads to the state \(s_g\).
    Consider the subsequence obtained by exhaustively removing cycles in \(\G\) from that sequence.
    The resulting task sequence describes a simple edge path from \(s_0\) to \(s_g\). 
    It might employ edges that were discarded from \(\G\) by the algorithm.
    However, in this case there are \(k\) other edges connecting the respective states in \(\G\).
    Thus, one can choose to replace deleted edges by existing ones in such a way that each replacement action occurs at most once in the path, which ensures that there are enough tasks to actually walk this path.
    Hence, there is a path from \(s_0\) to \(s_g\) that only uses edges that are present in \(\G\) after the deletion, and for which the number of required tasks does not exceed the number of available ones. 
    As the algorithm will find this path, it will correctly give a positive output.
    Finally, the overall running time can be upper-bounded by \bigO{|\I|^2+|A|k+|T|+2^k\cdot k^k}.
\end{proof} 

To obtain an analogous result for \pcover, a more sophisticated approach is necessary, since a solution may be required to visit a state multiple times.

\begin{theorem}
\label{thm:includeaction_fpt_empty}
	\pcover is polynomial-time solvable on primitive antichain task networks with constantly many states in the state transition graph. 
\end{theorem}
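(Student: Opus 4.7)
The plan is to reduce \pcover on primitive antichain task networks to constantly many integer flow feasibility problems on the state transition graph $\G$. Let $k$ denote the (constant) number of reachable states. After computing $\G$ in polynomial time, any solution to the instance corresponds to a walk $W$ in $\G$ starting at $s_0$ in which each action $a$ is used at least $|S|_a$ times (to cover $S$) and at most $|T|_a$ times (available tasks). In contrast to \ptarget in \cref{thm:reachtarget_fpt_empty}, such a walk may need to revisit states and edges many times because of the coverage requirement, so the enumeration of simple paths no longer suffices.

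Choosing multiplicities $x_a \in [|S|_a, |T|_a]$ for each action, the existence of $W$ is equivalent to the directed multigraph $H$ on $V(\G)$ (with edge $a$ appearing $x_a$ times) admitting an Eulerian trail from $s_0$. Recall that such a trail to some endpoint $s_{end}$ exists iff every vertex other than $s_0, s_{end}$ has balanced in- and out-degree (with $s_0, s_{end}$ imbalanced by $\pm 1$, or all balanced when $s_0 = s_{end}$) and the underlying undirected graph is connected after removing isolated vertices. I would handle the connectivity condition by enumerating the constantly many configurations of: the endpoint $s_{end}\in V(\G)$, the non-isolated vertex set $V'\supseteq\{s_0,s_{end}\}$ of $H$, and an undirected spanning tree $\tau$ of $\G[V']$ together with an orientation of each of its edges as a directed edge of $\G$. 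A short greedy argument shows that for each ordered pair $(s,s')$, any integer $y_{(s,s')} \in [\sum_{a:\,s\to s'}|S|_a,\;\sum_{a:\,s\to s'}|T|_a]$ can be decomposed into integer action-level multiplicities $x_a \in [|S|_a, |T|_a]$, so it suffices to work with the $O(k^2)$ edge-level variables $y_{(s,s')}$.

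For each configuration the problem then becomes: decide integer feasibility of flow conservation (with net imbalance $-1$ at $s_0$ and $+1$ at $s_{end}$) subject to the above box constraints, the strengthened lower bound $y_{(s,s')} \geq 1$ for each oriented tree edge (enforcing the chosen connectivity structure), and $y_e = 0$ for edges touching $V(\G)\setminus V'$ (so configurations are rejected outright whenever some $a\in S$ has such an edge). The resulting constraint matrix is totally unimodular, hence LP feasibility yields integer solutions in polynomial time. If any configuration is feasible, integer multiplicities $x_a$ are reconstructed, the multigraph $H$ is formed, its Eulerian trail is extracted in polynomial time, and the corresponding linearization is output as the witness solution. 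The main obstacle is the non-linear connectivity requirement for Eulerian trails, which is circumvented precisely by the constant-size enumeration over $V'$ and over spanning trees of $\G[V']$; everything else then reduces to standard polynomial-time integer-flow machinery.
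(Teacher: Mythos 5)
Your high-level architecture is legitimate and genuinely different from the paper's: you view a solution as an Eulerian trail of a multigraph $H$ obtained by assigning multiplicities to the edges of the state transition graph, and you handle the connectivity side-condition by enumerating the support $V'$ and an oriented spanning tree (the paper instead branches on a simple main path plus a set of simple cycles with a marking procedure, and solves a Lenstra-type ILP over cycle-traversal counts). However, there is a concrete gap in your aggregation step. A single action $a$ is executable from \emph{every} state containing $\pre(a)$, so it generally labels several edges of $\G$ with distinct tails (e.g., any action with empty precondition labels one edge out of every state). The bounds $\min(a)=|S|_a$ and $\max(a)=|T|_a$ are \emph{global} per action, constraining the total number of uses of $a$ summed over all edges it labels. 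Your per-pair box constraint $y_{(s,s')}\in\bigl[\sum_{a\colon s\to s'}|S|_a,\ \sum_{a\colon s\to s'}|T|_a\bigr]$ is therefore both too strong and too weak: it forces $|S|_a$ uses of $a$ on \emph{each} edge labeled $a$ (rejecting yes-instances where the coverage of $a$ is split across states, or achieved entirely at a different state), and it permits up to $|T|_a$ uses of $a$ on each such edge simultaneously, exceeding the number of available tasks (accepting no-instances). The claimed greedy decomposition of $y_{(s,s')}$ into per-action multiplicities only works if every action labels exactly one edge, which is not the case.

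Repairing this requires keeping one variable per (edge, action) pair together with coupling constraints $\min(a)\le\sum_{e\text{ labeled }a}z_e\le\max(a)$; but these grouping rows destroy the network-matrix structure, so the total-unimodularity/LP argument no longer applies, and the number of variables is $\Theta(k^2|A|)$, which is not constant. This is precisely where the paper's missing ingredient enters: it first merges actions into equivalence classes by their induced transition behaviour on the $k$ states (Reduction Rule R1), which bounds the number of actions by $(k+1)^k$, and then solves an integer program with a \emph{constant} number of variables via Lenstra's algorithm rather than via total unimodularity. If you import that equivalence-class reduction, your Eulerian-trail formulation with per-class, per-edge variables and a constant-variable ILP would go through; as written, the polynomial-time claim does not.
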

\begin{proof}[Proof Sketch]
    \pcover on primitive antichains can be expressed by deciding whether there is a walk in the state transition graph such that each action is used in the path at least as many times as required by $S$ and at most as many times as there are respective tasks in $T$. 
    We bound the number of actions by a function of the number $k = O(1)$ of states by identifying \emph{equivalent} actions that have the same impact on the states in the state transition graph.  
    Now we enumerate all possible \emph{types} of walks in constant time by branching on a ``main-path'' from $s_0$ to any $s \supseteq s_g$ and which sets of simple cycles the walk uses. 
    For all branches where the main-path and all selected cycles are connected, it remains to determine the number of times each cycle is traversed.
    We do so by solving an \emph{Integer Linear Program} (ILP) where there are variables stating the number of traversals of each of the $O(1)$ selected cycles, and a constraint for each action requiring that the number of times it is executed (as defined by the variables and the number of times it occurs in the main-path and the respective cycles) is admissible w.r.t. $S$ and $T$.   
\end{proof}

\paragraph*{Generalized Partial Order Width.}\label{sec:prim_powidth}
We have seen that the considered problems are polynomial-time solvable on primitive chains and antichains. Here, we unify and extend these results by establishing tractability for primitive HTNs of constant generalized partial order width (chains and antichains have generalized partial order width $1$ and $0$, respectively). 

\begin{theorem}
\label{thm:planverif_powidth_xp}
	\pverify is polynomial-time solvable on primitive task networks of constant generalized partial order width.
\end{theorem}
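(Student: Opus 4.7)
The plan is to set up a polynomial-size dynamic program over positions of the plan together with progress in a constant-size chain decomposition. Let $w$ denote the generalized partial order width of the network. I would first split $T$ into the set $T_{\text{iso}}$ of isolated tasks and the non-isolated tasks, and then use a standard algorithm (based on Dilworth's theorem via bipartite matching) to cover the non-isolated tasks by $w$ chains $C_1,\ldots,C_w$ of the partial order $\prec^+$. I would also precompute, for every action $a\in A$, the quantity $I_a=|\{t\in T_{\text{iso}}\colon \alpha(t)=a\}|$ as well as the prefix counts $P_a(i)=|\{j\le i\colon p_j=a\}|$ for $i\in\{0,1,\ldots,|p|\}$. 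A simple sanity check rules out trivially infeasible instances: for each $a$, the total number of $a$-labelled tasks in $T$ must equal $P_a(|p|)$.

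The DP states have the form $(i,j_1,\ldots,j_w)$, encoding that the first $i$ plan positions have been assigned and, for each $k$, the first $j_k$ tasks of the chain $C_k$ have been used; there are at most $(|p|+1)^{w+1}$ such states. Let $S(j_1,\ldots,j_w)$ denote the corresponding set of used chain tasks, and let $C_a(j_1,\ldots,j_w)$ count its tasks with action $a$. A state is declared \emph{valid} iff (i) $S(j_1,\ldots,j_w)$ is $\prec^+$-downward-closed, and (ii) $P_a(i)-C_a(j_1,\ldots,j_w)\in[0,I_a]$ for every action $a$. From a valid state $(i,j_1,\ldots,j_w)$ we transition to $(i+1,j_1,\ldots,j_w)$ (an isolated task fills position $i+1$) whenever the resulting state is valid, and to $(i+1,j_1,\ldots,j_k+1,\ldots,j_w)$ whenever the $(j_k+1)$-st task of $C_k$ has action $p_{i+1}$ and all of its $\prec^+$-predecessors already lie in $S(j_1,\ldots,j_w)$, subject again to validity of the target state. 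We accept iff $(|p|,|C_1|,\ldots,|C_w|)$ is reachable from $(0,\ldots,0)$.

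For correctness, any full solution whose linearization induces the plan $p$ yields an accepting DP path by reading off chain progress after each position, and conversely every accepting DP path produces a linearization: chain tasks are placed in chain order at exactly the positions where their chain transitioned, while isolated tasks with matching actions fill the remaining positions in any order. The transition rule for chain advances guarantees that $\prec^+$ is respected even across chains, and the combination of the sanity check with validity at the final state ensures that every task is used exactly once with the correct action. Since there are polynomially many states and each transition is checkable in polynomial time, the total running time is polynomial for constant $w$.

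The principal obstacle I anticipate is arguing that the compact state $(i,j_1,\ldots,j_w)$ carries enough information: one might worry that different DP paths reaching the same state have committed to different specific isolated tasks and behave differently thereafter. This concern dissolves because the chain progress deterministically fixes both the multiset of chain-covered positions and the multiset of actions absorbed by isolated tasks so far, so the quantities $P_a(i)-C_a(j_1,\ldots,j_w)$ are path-independent, and isolated tasks sharing an action are fully interchangeable. A second delicate point is the interaction between chain advances and $\prec^+$-edges that jump between different chains of the Dilworth decomposition; handling this is exactly the purpose of the explicit downward-closedness check built into the validity condition.
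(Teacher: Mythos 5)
Your proposal is essentially the same algorithm as the paper's: a dynamic program over plan positions and progress indices into a Dilworth chain cover of the non-isolated tasks, with isolated tasks handled purely by per-action counting (your observation that $P_a(i)-C_a(j_1,\ldots,j_w)$ is path-independent is exactly the interchangeability argument the paper uses via its quantities $o$ and $|V'|$). The one thing you omit is the check that the plan $p$ is actually \emph{executable} from $s_0$ with respect to preconditions and effects; your DP only verifies that the tasks can be matched to the plan's action sequence consistently with $\prec^+$. This is easily repaired---as in the paper, simulate $p$ from $s_0$ once at the outset and reject if some precondition fails---but as written your algorithm would accept non-executable plans.
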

\begin{proof}[Proof Sketch]
    We use dynamic programming to decide, for each index $i$ in the plan and all combinations of indices $h_1,\ldots,h_w$ in the $w=O(1)$ many chains partitioning the non-isolated tasks, whether there is a solution matching the first $i$ actions in the plan and using exactly the first $h_1,\ldots,h_w$ tasks of each chain, respectively. Clearly, this is possible for $i=h_1=\ldots = h_w = 0$ and we iteratively add a next admissable task from a chain or the isolated tasks.  
\end{proof}

\begin{theorem}
\label{thm:target_cover_powidth_xp}
	\ptarget and \pcover are polynomial-time solvable on primitive task networks with constantly many states in the state transition graph and constant generalized partial order width.
\end{theorem}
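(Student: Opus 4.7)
My plan is to combine the chain-indexed dynamic programming from \cref{thm:planverif_powidth_xp} with the state-transition-graph analysis from \cref{thm:includeaction_fpt_empty}. Let $w$ be the generalized partial order width and $k$ the number of states in the state transition graph $\G$, both constants. Partition the non-isolated tasks into $w$ chains $C_1, \ldots, C_w$. I would first group actions into \emph{state-equivalence classes} $E_1, \ldots, E_c$, where two actions are equivalent iff, on every state of $\G$, they agree both in executability and in their resulting state. Each class is then determined by a function $[k] \to [k] \cup \{\bot\}$, so $c \leq (k+1)^k = O(1)$. For each $E_j$, set $n_j^I = \sum_{a \in E_j} n_a^I$, where $n_a^I$ is the number of isolated tasks with action $a$.

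The core of the algorithm is a DP over configurations $(h_1, \ldots, h_w, u_1, \ldots, u_c, s)$, where $h_i \in \{0,\ldots,|C_i|\}$ counts the executed tasks from chain $C_i$, $u_j \in \{0,\ldots,n_j^I\}$ counts executed isolated tasks from class $E_j$, and $s$ is the current state. Transitions either execute the next task from some chain $C_i$ or execute one isolated task from some class $E_j$ with remaining capacity; both require the precondition to hold in $s$. The table has $O(|T|^{w+c} \cdot k)$ entries, which is polynomial, and can be filled in polynomial time by forward propagation from $(0,\ldots,0,s_0)$.

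For \ptarget, I would accept iff some reachable configuration has $s \supseteq s_g$. For \pcover, for each reachable configuration I would check whether the aggregated counts $u_j$ can be realized by specific actions so as to cover $S$. Writing $\chi_a$ for the number of chain tasks with action $a$ among the executed prefixes, and setting $L_a := \max\{0, S(a) - \chi_a\}$, this feasibility is equivalent to requiring $L_a \leq n_a^I$ for every action $a$ together with
\[
\sum_{a \in E_j} L_a \;\leq\; u_j \;\leq\; \sum_{a \in E_j} n_a^I \qquad \text{for every class } E_j.
\]

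The main subtlety I anticipate is that the DP deliberately abstracts isolated tasks by equivalence class rather than by specific action; correctness for \pcover thus depends on arguing that any abstract configuration certified by the inequalities above can in fact be realized by a concrete linearization using distinct isolated tasks that meet $S$'s demands. This will follow from the observation that, within each $E_j$, all actions share identical state effects and are therefore freely interchangeable in any valid execution, so after fixing feasible per-action counts $x_a \in [L_a, n_a^I]$ with $\sum_{a \in E_j} x_a = u_j$ any DP witness can be turned into an actual solution.
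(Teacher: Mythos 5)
Your proposal is correct and follows essentially the same route as the paper's proof: the same action-equivalence classes (determined by a map $[k]\to[k]\cup\{\bot\}$, hence $(k+1)^k=O(1)$ classes), the same DP over chain prefixes, per-class counts of executed isolated tasks, and the current state, and the same final interchangeability argument to realize aggregated class counts by concrete per-action counts covering $S$. The only cosmetic differences are that your \pcover feasibility test is phrased as an explicit interval-sum condition (equivalent to the paper's $\sum_{a\in e_i}\max\{0,\min(a)-c(a)\}\le r_i$ check) and that the upper bound $u_j\le\sum_{a\in E_j} n_a^I$ is already enforced by the DP's capacity constraint.
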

\begin{proof}[Proof Sketch] 
We note that the number $|E|$ of action equivalence classes that have the same impact on the state graph is constant. 
We use this in a dynamic programming approach similar to the one used for \cref{thm:planverif_powidth_xp}. 
Here, we decide for each state $s$, indices $h_j, j\in[w]$ in the $w$ chains, and numbers of actions $r_i, i\in [|E|]|$ in the equivalence classes, whether there is a solution with a linearization that
    uses exactly the first $h_j$ tasks of each chain $c_j$, 
    results in state $s$, 
    and uses exactly $r_i$ tasks of each action equivalence class $e_i$.
    We initialize all variables to be \False except for the one with $s=s_0$ and all $h_j$ and $r_i$ are $0$. When a variable is set to \True (including the initial one), for each next index in a chain and each equivalence class for which the $h_j$ and $r_i$ imply that there is still an unused isolated task, we test whether this task can be appended to a sequence witnessing the \True variable. If so, the corresponding variable is set to \True as well. 
\end{proof}

We recall that Theorem~\ref{thm:target_cover_powidth_xp} also immediately implies the analogous result for \pexist.

\section{Hierarchical Task Networks}\label{sec:htns}
Considering HTNs with compound tasks adds another, seemingly opaque layer of difficulty to the considered planning problems. In this section, we show that---at least in terms of classical complexity---we can cleanly characterize the jump from primitive to compound through three measures on compound task networks: we obtain an algorithmic meta-theorem that allows to lift polynomial-time solvability from primitive to compound task networks if all three of these measures are bounded. Moreover, this result is tight in the sense that none of the three measures can be dropped. The three measures we need are $\compn$, $\comps$, and $\compd$:

\begin{definition}
For an initial task network \mbox{\(tn=(T,\prec^+,\alpha)\)} in a domain $D$, we refer to the number of compound tasks in \(tn\) by 
\(\compn = |\set{t\in T\mid \alpha(t)\in \C}|\).
The maximum size of a network that a task can be decomposed into is
\(\comps = \max\set{|T_m|\mid (c, (T_m,\prec^+_m,\alpha_m))\in M}\).
The maximum depth \compd of a network is defined recursively:
\(\compd=0\) for primitive task networks, and a compound task network has \(\compd=i\) if $i$ is the smallest integer such that every decomposition of all compound tasks in the initial network $T$ results in a new network where \(\compd\leq i-1\). We let $\compd=\infty$ if no such $i$ exists (cyclic decomposition).
\end{definition}

For our later considerations, it will be useful to additionally define the maximum number of pairwise non-isomorphic networks that a task can be decomposed into; we denote that by \(\compc = \max_{c\in \C} |\set{tn_m \mid (c, tn_m)\in M}|\). The next technical lemma shows how these measures help us bound the ``decomposition complexity'' of compound HTNs.

\begin{lemma}
\label{lem:howmanydecompositions}
    If $\compd\neq \infty$, then a task network \(tn=(T,\prec^+,\alpha)\) can be decomposed into at most \(\compc^{\sum_{i=0}^{\compd-1} \compn\cdot \comps^i}\) pairwise non-isomorphic primitive task networks \(tn' = (T',{{\prec^+}}',\alpha')\).
    For each of these, \(|T'|\le |T|+\compn\cdot (\comps^{\compd}-1)\).
\end{lemma}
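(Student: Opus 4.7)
The plan is to proceed by induction on $\compd$, tracking the total size of the task network and the number of compound tasks it contains after each ``level'' of decomposition, where one level consists of simultaneously decomposing all currently-present compound tasks. Since $\compd < \infty$, the definition ensures that after at most $\compd$ such levels, no compound task remains, so every primitive descendant of $tn$ is reached after at most $\compd$ levels.

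Concretely, I would let $s_i$ and $c_i$ denote the maximum possible number of tasks and the maximum possible number of compound tasks, respectively, in a network obtained after $i$ full decomposition levels starting from $tn$. Base case: $s_0 = |T|$ and $c_0 \leq \compn$. For the inductive step, a single decomposition replaces one compound task by a subnetwork of at most $\comps$ tasks (of which up to $\comps$ are compound), so applying this to each of the $c_i$ compound tasks at level $i$ yields $s_{i+1} \leq s_i + c_i(\comps - 1)$ and $c_{i+1} \leq c_i \cdot \comps$. Unrolling gives $c_i \leq \compn \cdot \comps^i$, and hence
\[
  s_{\compd} \;\leq\; |T| + (\comps - 1)\sum_{i=0}^{\compd-1} \compn \cdot \comps^i \;=\; |T| + \compn \cdot (\comps^{\compd} - 1),
\]
by the geometric sum formula. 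This proves the size bound, noting that $s_{\compd}$ upper-bounds $|T'|$ since any primitive network obtained from $tn$ arises after at most $\compd$ levels.

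For the counting bound, I would observe that each primitive network $tn'$ obtainable from $tn$ is determined (up to isomorphism) by the choice of method $tn_m \in M(c)$ selected for each compound task $c$ that is ever decomposed along the way. At level $i$, there are at most $c_i \leq \compn \cdot \comps^i$ compound tasks, each of which has at most $\compc$ possible decompositions. Since the total number of compound tasks that are ever decomposed along a single root-to-leaves process is at most $\sum_{i=0}^{\compd - 1} c_i \leq \sum_{i=0}^{\compd - 1} \compn \cdot \comps^i$, the number of possible choice sequences is at most $\compc^{\sum_{i=0}^{\compd-1} \compn \cdot \comps^i}$. Since each non-isomorphic primitive $tn'$ corresponds to at least one such choice sequence, the claimed bound follows.

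The main subtlety, and the step I would pay the most attention to, is reconciling the sequential nature of decomposition (one compound task at a time, in any order) with the ``level-wise'' accounting used above. I would argue that for counting purposes this ordering is irrelevant: the resulting primitive network depends only on the multiset of method choices made for the compound tasks encountered, not on the order. Similarly, the size bound is ordering-independent because each application of a method contributes a fixed increment of at most $\comps - 1$ to the total number of tasks. Finally, I would invoke the assumption $\compd \neq \infty$ to guarantee that the recursion on compound tasks terminates within $\compd$ levels, so the summations above are finite and correctly bound the quantities of interest.
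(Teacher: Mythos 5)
Your proposal is correct and follows essentially the same level-wise accounting as the paper's proof: both bound the number of compound tasks at level $i$ by $\compn\cdot\comps^i$, multiply $\compc$ choices over all compound tasks ever decomposed to get $\compc^{\sum_{i=0}^{\compd-1}\compn\comps^i}$, and derive the size bound from the same geometric sum (the paper phrases it as each original compound task spawning at most $\comps^{\compd}$ tasks, you phrase it as a telescoping increment of $\comps-1$ per decomposition, which is equivalent). Your explicit remark that the order of decompositions is immaterial is a point the paper leaves implicit, but it does not change the argument.
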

 \begin{proof}
    We first show the second part of the statement.
    The \(|T|-\compn\) primitive tasks in \(tn\) are still present in any network \(tn'\) that \(tn\) can be decomposed into.
    A compound task in \(tn\) can be decomposed into at most \comps compound tasks in one decomposition step and there are at most \(\compd\) decomposition steps. 
    Thus, each compound task in \(tn\) can create at most \(\comps^\compd\) tasks in \(tn'\).
    Hence, \(|T'|\le |T|-\compn +\compn\cdot \comps^\compd\).

    For the first statement, note that decomposing each of the \(\compn\) compound tasks once in a task network of depth \(\compd\) can result in at most \(\compc^\compn\) pairwise non-isomorphic task networks of depth at most \(\compd-1\).
	Further, each of these resulting networks has at most \(\compn\cdot \comps\) compound tasks.
    Thus, \(tn\) can be decomposed into at most
    \begin{align*}
        \compc^{\compn} \cdot \compc^{\compn\cdot \comps} \cdot \ldots \cdot \compc^{\compn\cdot \comps^{\compd-1}}
        = \compc^{\sum_{i=0}^{\compd-1} \compn\cdot \comps^i}
    \end{align*}
    pairwise non-isomorphic task networks.    
\end{proof} 

Having defined our measures of interest, we proceed to formalize the set of problems our meta-theorem will capture. Intuitively, the aim is to formalize the set of all problems which deal with compound HTNs by decomposing them.

\begin{definition}
    Let \textsc{PR} be an arbitrary HTN problem.
    Then, \textsc{PR} is {\it decomposable} if and only if, for all non-primitive initial task networks \(tn\), \textsc{PR} on \(tn\) is a yes-instance if and only if there is a primitive task network \(tn'\) such that \(tn\) can be decomposed into \(tn'\) and \(tn'\) is a yes-instance of \textsc{PR}.
\end{definition}

Note that all problems discussed in this work are decomposable.
Next, in order to capture the general notions of ``tractability'' for various HTN problems on networks where certain measures are bounded, we need to slightly restrict the measures in question to avoid entirely degenerate measures.

\begin{definition}
    Let \(\kappa\) be any numerical measure of HTNs in a domain $D$, i.e., a mapping that assigns each HTN a non-negative number. We say that $\kappa$ is \emph{stable} if there exists a computable function~$f$ with the following property: for each primitive HTN~$tn^*$ consisting of $c$ tasks and each HTN~$tn$ containing some compound task $t$, the HTN~$tn'$ obtained from $tn$ by decomposing $t$ into $tn^*$ satisfies $\kappa(tn', D)\leq f(\kappa(tn, D),c)$.
\end{definition}

We say that a stable measure $\kappa$ is a \emph{stable tractability measure} for a decomposable HTN problem \textsc{PR} if, for each constant $c$, \textsc{PR} can be solved in polynomial time on all primitive HTNs $tn$ such that $\kappa(tn, D)\leq c$.
Notably, the (generalized) partial order width measure fits in that definition, where its stability is due to the fact that the addition of $c$ vertices to a graph can only increase its (generalized) partial order width by at most $c$. 
\begin{observation}
\label{obs:postable}
    (Generalized) partial order width is a stable tractability measure for \pverify. 
    The sum of the (generalized) partial order width and the number of states in the state space is a stable tractability measure for \pcover and \ptarget.
\end{observation}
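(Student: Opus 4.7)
The plan is to verify the two defining conditions of a stable tractability measure: (i) polynomial-time solvability on primitive instances of constantly bounded measure, and (ii) the stability bound under a single decomposition step. The first condition is immediate: for \pverify, it is given by \Cref{thm:planverif_powidth_xp}, and for \pcover and \ptarget by \Cref{thm:target_cover_powidth_xp}; in the latter case we use that bounding the sum of the generalized partial order width and the number of reachable states also bounds each summand individually.

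For the stability of the number of reachable states, I would first observe that the state transition graph is determined entirely by the domain $D$ (in particular, by $A$ and $\delta$) and the initial state $s_0$; it does not depend on $tn$ at all. Hence, decomposing a compound task does not change the number of states, so this measure is stable with $f(s,c)=s$.

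The bulk of the argument lies in showing stability of the (generalized) partial order width. Let $tn'$ be obtained from $tn$ by decomposing a compound task $t$ into a primitive subnetwork $tn^*$ with task set $T^*$ of size $c$. Since tasks in $T^*$ inherit exactly the external comparabilities of $t$, any task outside $T^*$ that is comparable to $t$ in $tn$ becomes comparable to every task in $T^*$ in $tn'$. For any antichain $A$ of $tn'$, I would split it as $A = (A\cap T^*)\cup(A\setminus T^*)$ and distinguish two cases. If $A\cap T^*=\emptyset$, then $A$ is an antichain of $tn$ avoiding $t$, giving $|A|\le w(tn)$. Otherwise every element of $A\setminus T^*$ is incomparable to each element of $T^*$, hence (by the inheritance above) incomparable to $t$ in $tn$; thus $(A\setminus T^*)\cup\{t\}$ is an antichain of $tn$, so $|A\setminus T^*|\le w(tn)-1$, and combining with $|A\cap T^*|\le c$ yields $|A|\le w(tn)+c-1$. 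Overall $w(tn')\le w(tn)+c$, proving stability for the partial order width.

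The main obstacle will be adapting this to the \emph{generalized} partial order width, since I must argue on the restriction to non-isolated elements and decomposition can both create and destroy isolation. I would split on whether $t$ is isolated in $tn$: if $t$ is isolated, no task in $tn^*$ gains any external comparability, so the non-isolated parts of $tn$ and of $tn^*$ are order-disjoint in $tn'$, and widths of disjoint unions add, giving a bound of $w(tn)+c$. If $t$ is non-isolated in $tn$, then every task in $T^*$ becomes non-isolated in $tn'$, and the case analysis from the previous paragraph carries over unchanged—crucially, the antichain $(A\setminus T^*)\cup\{t\}$ used in the second case lies within the non-isolated elements of $tn$. In both situations the bound $w(tn')\le w(tn)+c$ holds, establishing stability. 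Combined with the unchanged state count, the sum is also stable, completing the proof.
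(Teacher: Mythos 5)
Your proposal is correct and follows essentially the same route as the paper, which justifies the observation by citing \cref{thm:planverif_powidth_xp,thm:target_cover_powidth_xp} for primitive tractability and by noting that adding $c$ vertices increases the (generalized) partial order width by at most $c$, the state count being independent of the task network. You merely spell out the antichain case analysis (including the isolation subtlety for the generalized width) that the paper leaves implicit.
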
 

We are now ready to establish the first meta-theorem.

\begin{theorem}
\label{thm:meta_for_compounds}
    Let {\sc PR} be a decomposable HTN problem and $\kappa$ a stable tractability measure. 
    Then, {\sc PR} is polynomial-time solvable when restricted to any initial network $tn$ in domain~$D$ where
    $\compdns,$ and $\kappa(tn,D)$ are constants.
\end{theorem}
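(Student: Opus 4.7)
The plan is to reduce the compound case to the primitive case by enumerating all primitive decompositions of the initial network $tn$, and then invoking the decomposability of {\sc PR} along with the tractability of $\kappa$ on primitive networks.

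First, I would invoke \cref{lem:howmanydecompositions}: since $\compc, \compd, \compn, \comps$ are all constants, the number $N$ of pairwise non-isomorphic primitive networks that $tn$ can be decomposed into is bounded by the constant $\compc^{\sum_{i=0}^{\compd-1} \compn \cdot \comps^i}$, and each such primitive network $tn'$ has at most $|T| + \compn \cdot (\comps^{\compd}-1) = |T| + \bigoh{1}$ tasks, hence size polynomial in the input. A standard recursive enumeration---at each step, pick a compound task and branch over its at most $\compc$ decomposition methods---produces all these primitive decompositions in time polynomial in the input (with constants depending on $\compc, \compd, \compn, \comps$).

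Second, I would verify that $\kappa(tn', D)$ stays constant for every primitive decomposition $tn'$ produced. The key observation is that for each of the at most $\compn$ original compound tasks $t$ in $tn$, the ``eventual'' primitive subnetwork $tn^*_t$ that $t$ gets replaced by (obtained by iteratively applying the decomposition methods beneath $t$) has size at most $\comps^{\compd}$ by the same reasoning as in \cref{lem:howmanydecompositions}. Hence $tn'$ can be viewed as arising from $tn$ in at most $\compn$ ``batched'' steps, each of which replaces a single compound task by a primitive network of constant size $\comps^{\compd}$. Applying the stability inequality $\kappa(tn_{i+1},D) \le f(\kappa(tn_i, D), \comps^{\compd})$ iteratively $\compn$ times starting from $\kappa(tn,D)$, we obtain $\kappa(tn',D) \le f^{(\compn)}(\kappa(tn,D), \comps^{\compd})$, which is a constant since both its arguments and the number of iterations are constants.

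Third, since $\kappa$ is a stable tractability measure and $\kappa(tn',D)$ is a constant, {\sc PR} can be solved in polynomial time on each primitive decomposition $tn'$. The overall algorithm then enumerates all $N = \bigoh{1}$ non-isomorphic primitive decompositions of $tn$, runs the polynomial-time algorithm for {\sc PR} on each, and returns \True if and only if at least one of them is a yes-instance. Correctness is immediate from decomposability of {\sc PR}, and the total running time is a constant number of polynomial-time computations, hence polynomial.

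The main obstacle is the bookkeeping in the second step: the stability axiom is phrased in terms of replacing a compound task by a \emph{primitive} network, whereas actual decomposition methods in $M$ may produce compound networks. The trick of ``batching'' the decomposition of each original compound task $t$ all the way down to its eventual primitive realization $tn^*_t$ (of size at most $\comps^{\compd}$) is what allows us to invoke stability only $\compn$ times, keeping the measure bounded. Care must be taken that this batched view is equivalent to the literal step-by-step decomposition defined by $M$, which follows since the final primitive networks produced are identical.
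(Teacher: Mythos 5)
Your proposal follows essentially the same route as the paper: enumerate all primitive decompositions via \cref{lem:howmanydecompositions}, bound the measure of each via stability, and solve each with the polynomial-time algorithm for primitive networks; your ``batching'' of the stability applications (treating each of the $\compn$ original compound tasks as being replaced in one step by its eventual primitive realization of size at most $\comps^{\compd}$) is exactly how the paper invokes stability as well. The one slip is that you treat $\compc$ as a constant, but \cref{thm:meta_for_compounds} only assumes $\compdns$ and $\kappa(tn,D)$ are constants ($\compc$ is only added as a parameter in the fixed-parameter variant, \cref{thm:meta_for_compounds_fpt}); since $\compc \le |\I|$, the number of primitive decompositions is $\compc^{g(\compdns)} \le |\I|^{g(\compdns)}$, i.e., polynomial rather than constant, which still yields the claimed polynomial running time.
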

\begin{proof}
    Let \(\I\) have task network \(tn=(T,\prec^+,\alpha)\).
    If there are no compound tasks in \(T\), then we apply an algorithm for primitive networks.
    Thus, assume \(\compd\ge 1\). 
    Then, \textsc{PR} in \(tn\) can be solved by listing all the primitive instances \(tn\) can be decomposed into and solving \(\textsc{PR}\) in each of these.
    By Lemma~\ref{lem:howmanydecompositions}, there are at most \(\compc^{g(\compdns)} \leq |\I|^{g(\compdns)}\) such instances \(\I_p\) for \(g(\compdns)=\sum_{i=0}^{\compd-1} \compn\cdot \comps^i\) and, for each of these, \(|\I_p|\le |\I| + f''(\compdns)\) for some computable function \(f''\).
    Furthermore, for each initial task network \(tn_p\) of an instance \(\I_p\), we have \(\kappa(tn_p, D) \le f'''(\kappa(tn, D),\compdns)\), for some computable function \(f'''\) as \(\kappa\) is a stable measure and $tn_p$ is obtained from $tn$ by sequentially decomposing $\compn$ compound tasks into a total of $f''(\compdns)$ primitive tasks.
    Thus, if primitive instances \(\I_p\) can be solved in time at most \(|\I_p|^{f(\kappa(tn_p, D))}\), then the total runtime (for some computable function $f'$) is 
    \begin{equation*}
    \begin{split}
         \mathcal{O}\Big((|\I| + f''(\compdns))^{f(f'''(\kappa(tn, D),\compdns))}\\
         \cdot \compc^{g(\compdns)}\Big)   
         = \bigO{|\I|^{f'(\kappa(tn, D),\compdns)}}.\qedhere
         \end{split}
    \end{equation*}
\end{proof}

We complement Theorem~\ref{thm:meta_for_compounds} via lower bounds which show that bounding \compd, \compn, and \comps is necessary.
Specifically, we prove that if any one of these measures is not bounded by a constant, then some of the considered problems become \nphard\ on compound networks, even for settings that were easily solvable on primitive networks. 
While other reductions show the hardness deriving from decomposing non-primitive networks (see, e.g.,~\cite{Erol_Hendler_Nau_1996,Alford_Bercher_Aha_2015,Behnke_Hoeller_Biundo_2015}), our reduction is stronger in the sense that it bounds all but any one of the parameters \compd, \compn, \comps. 

\begin{theorem}
\label{thm:meta_theorem_lowerbound}
For each $\iota\in \{\compdns\}$, \pexist, \pcover, and \ptarget remain \nphard\ when restricted to instances which have (generalized) partial order width at most $1$ and where $\{\compdns, \compc\}\setminus \{\iota\}$ are upper-bounded by a constant.
\end{theorem}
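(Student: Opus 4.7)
The proof consists of three separate reductions, one for each $\iota \in \{\compdns\}$, each reducing from the \NP-hardness of multi-chain primitive instances (Theorem~\ref{thm:target_cover_chains_hard}) via an encoding suited to the parameter being left unbounded. Since the hardness source covers \pexist and \ptarget, and \pcover-hardness follows from \pexist-hardness via the standard reduction noted in the paper, each construction simultaneously handles all three problems. Throughout, the gen.\ pow of the initial network stays at most~$1$ and the state space remains bounded, matching the preconditions of the meta-theorem.

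For $\iota = \comps$, the reduction is immediate: given a hard primitive instance $\mathcal{I}_p$ with network $tn_p$, the initial network of the constructed HTN consists of a single isolated compound task $c^*$ with the unique decomposition method $c^* \to tn_p$. Then $\compn = \compd = \compc = 1$, $\comps = |tn_p|$ is unbounded, and the initial network has gen.\ pow $0$; correctness is immediate since (full) solutions correspond bijectively to those of $\mathcal{I}_p$. For $\iota = \compd$, we build the target multi-chain primitive instance via recursive decomposition from $O(1)$ initial compound tasks, introducing a family of compound task types whose $O(1)$ decomposition methods yield $O(1)$-sized subnetworks containing further compound tasks of strictly smaller depth. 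By carefully choosing types and methods, after $\compd$ levels of decomposition the fully-decomposed primitive network matches the hard instance; $\compd$ scales with the input while $\compn, \comps, \compc$ remain constant.

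For $\iota = \compn$, we use $\compn$ compound tasks in the initial network to encode the hardness directly. A concrete construction places a chain of $|u|$ compound tasks in the initial network (each decomposing to a single primitive action) to encode the ``target'' chain of the hardness source, together with one isolated compound task per word chain (each decomposing into a primitive chain of bounded length). The $|u|$ compound tasks form a chain of gen.\ pow~$1$, whereas the isolated compound tasks are removed by the definition of \emph{generalized} partial order width, so the initial network has gen.\ pow $\leq 1$; meanwhile $\compd, \comps, \compc = O(1)$ while $\compn$ grows with the input.

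The principal obstacle is the $\compn$ case: the bound on $\comps$ forces each ``word'' chain in the decomposed primitive network to be short, and the hardness source must survive this restriction. Resolving this requires showing that the reduction underlying Theorem~\ref{thm:target_cover_chains_hard} yields \NP-hardness even when all chains except the ``target'' one have bounded length---equivalently, that the shuffle-acceptance problem for a bounded-state automaton on unboundedly many bounded-length strings is \NP-hard---or, if this restricted version turns out to be polynomial, invoking an alternative \NP-hard source tailored to the combinatorics of many isolated compound tasks each with only a small number of decomposition choices.
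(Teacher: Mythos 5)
Your constructions for $\iota=\comps$ (wrap the hard primitive instance of \cref{thm:target_cover_chains_hard} in a single compound task) and $\iota=\compd$ (unfold it gradually via a deterministic recursion) are sound and, for those two cases, arguably simpler than what the paper does; they also keep the state space bounded, which the paper's construction does not. The problem is the $\iota=\compn$ case, which you flag but do not resolve---and it is precisely the case where the real difficulty lives. When $\comps$, $\compd$, and $\compc$ are all constant, each compound task can only ever produce $\comps^{\compd}=O(1)$ primitive tasks, so no reduction that hides the hardness in the \emph{primitive structure} of the decomposed network can work: your word chains $c_i$ would all have constant length, and \textsc{Shuffle Product} with constant-length words is polynomial-time solvable (a dynamic program over the prefix of $u$ and the multiset of remaining word-suffixes has only $n^{O(1)}$ states, since there are $O(1)$ distinct suffixes each with a count at most $w$). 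So your primary route provably fails, and the ``alternative \NP-hard source'' you defer to is exactly the missing content of the theorem. The hardness for $\iota=\compn$ must instead come from the exponential number of \emph{decomposition choices}: the paper reduces from \textsc{Multicolored-Clique}, giving every vertex and edge of $G$ a compound task on a single chain, each with two methods (``select'' a primitive task with state effects, or a vacuous dummy), with the state space enforcing that one vertex per color is selected and that an edge can be selected only if both endpoints are; a final task requires all $\binom{k}{2}$ color-pair edge propositions. The $\comps$ and $\compd$ cases are then obtained as small variations of that one gadget.

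Two further issues with your $\compn$ construction as written. First, placing one isolated compound task per word makes the plain partial order width of the initial network $w+1$; only the \emph{generalized} width is $1$, whereas the paper's claim (and proof) covers both readings by using a single total chain. Second, even granting the setup, you would still need the actions and propositions realizing the shuffle semantics to be specified; none of that is present. As it stands the proposal establishes the theorem only for $\iota\in\{\comps,\compd\}$.
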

\begin{proof}[Proof Sketch]
We reduce from \textsc{Multicolored-Clique}, which is \nphard and asks whether a given properly $k$-vertex-colored graph \(G\) contains a $k$-clique~\cite{FELLOWS2009}. The reduction creates a compound task for each vertex and edge of $G$. These can all either be decomposed into a primitive vertex- or edge-task, or an empty ``dummy'' network. The actions and state space of the output instance are set up in a way which ensures that, for each color class, only a single primitive vertex task can be executed. The vertices for which we execute the corresponding primitive vertex task are considered as ``chosen'', and a primitive edge task can only be executed if both of its incident vertices are chosen. Based on the targeted HTN problem, we ensure the instance admits a solution if and only if it is possible to execute $k \choose 2$ primitive edge tasks, matching each of the $k \choose 2$ pairs of colors. 

The above completes the reduction for the case where $\iota=\compn$. For the case where $\iota=\comps$, we slightly adapt the construction by having the whole network (as described above) be the result of decomposing a single initial compound task $t_0$. Finally, when $\iota=\compd$ we build on the case of $\iota=\comps$ by making the decomposition from $t_0$ ``gradual'' as opposed to only occurring in a single step.
\end{proof}

\section{A Parameterized Analysis of HTNs}
\label{sec:parcomp}

In this final section, we ask whether---and for which parameterizations---our problems on primitive HTNs admit fixed-parameter algorithms. 
While there are many graph-theoretic parameters typically used to achieve fixed-parameter tractability, none of the ``usual suspects'' in that regard will help with our problems due to the hardness on forests of stars \cite{Lin_Bercher_2023}  
and sets of chains (\cref{thm:planverif_chains_hard,thm:target_cover_chains_hard}).
Indeed, this applies to \emph{treewidth}~\cite{RobertsonS86}, the \emph{feedback edge number}~\cite{GanianK21}, and similar measures defined on digraphs~\cite{GanianHK0ORS16}.

On the other hand, our problems are polynomial-time solvable for constant generalized partial order width and so we need a more refined notion of hardness to rule out fixed-parameter tractability in this case.
In particular, hardness w.r.t.\ a parameter $k$ for the complexity classes $\W[1]$ or $\W[2]$ rules out any fixed-parameter algorithm under the well-established assumption that $\W[1]\neq \FPT$.
As the \textsc{Shuffle Product} problem is \W[2]-hard when parameterized by the width~\(w\)~\cite{Bevern_etal_2016}, 
\cref{thm:planverif_chains_hard,thm:target_cover_chains_hard} also imply the \W[2]-hardness of the considered problems.

\paragraph*{Fixed-Parameter Tractability via Vertex Cover.}
We contrast these lower bounds with a fixed-parameter algorithm for primitive networks that exploits a different parameterization, namely the vertex cover number (vcn) of \(G_\prec\).
A vertex cover of \(G_\prec\) is a subset of tasks \(V\subseteq T\) such that, for all \(t \prec t'\), \(t\in V\) and/or \(t'\in V\); the vcn of \(G_\prec\) is the minimum size of a vertex cover of \(G_\prec\). 
While the vcn has been used to obtain fixed-parameter algorithms for many other problems~\cite{BalkoCG00V022,BlazejGKPSS23,Icalpvc,FionnColt}, the techniques from previous works do not easily translate to HTN problems. 
Instead, our algorithms use a delicate branching routine whose correctness requires a careful analysis of the state space.

\begin{theorem}
\label{thm:planverif_vertexcover}
    \pverify is fixed-parameter tractable on primitive task networks parameterized by the vcn of $G_\prec$.
\end{theorem}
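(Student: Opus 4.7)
The plan is to exploit the vertex cover $V$ of $G_\prec$ (with $|V|\leq k$) as the structural backbone for a three-stage FPT algorithm. First, I would compute $V$ in FPT time by a standard vertex-cover algorithm and let $I := T\setminus V$ denote the complementary independent set. Since $I$ is independent in $G_\prec$, any $\prec^+$-comparability involving a task $t\in I$ must be witnessed via a $V$-task on its covering path; consequently each $t\in I$ is fully characterized with respect to $\prec^+$ by the sets $P(t):=\{v\in V : v\prec^+ t\}$ and $S(t):=\{v\in V : t\prec^+ v\}$, both subsets of $V$.

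Second, I would enumerate all linear extensions $\tau$ of $\prec^+|_V$; there are at most $k!$ of them, and each $\tau$ fixes a candidate relative order in which the $V$-tasks appear in the plan $p$. Under $\tau$, the $V$-tasks partition $p$ into at most $k{+}1$ segments, and each $I$-task $t$ becomes confined to a contiguous segment range $[\ell(t),r(t)]\subseteq\{0,\ldots,k\}$ determined by $P(t)$, $S(t)$, and $\tau^{-1}$. Verifying $p$ under $\tau$ then amounts to selecting positions $\pi_1<\ldots<\pi_k$ in $p$ with $p_{\pi_j}=\alpha(v_{\tau(j)})$, such that in each segment $(\pi_j,\pi_{j+1})$ the multiset of remaining actions can be covered by $I$-tasks of matching action whose range contains that segment. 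For fixed $\pi$ this is a transportation problem with interval-constrained supply, which is polynomially decidable via Hall-type conditions on contiguous segment subsets.

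The main obstacle is avoiding a naive $\binom{n}{k}$ enumeration of $\pi$, which only yields an XP algorithm. To attain FPT, the third stage employs a delicate branching routine over positions, supported by a careful state-space analysis. Concretely, for each pair consisting of an action $a$ and a contiguous range $[j_1,j_2]$ of segments, the Hall-type condition translates into a linear inequality on the quantity $F_a(\pi_{j_2+1}-1)-F_a(\pi_{j_1})$, where $F_a$ is the prefix-count function of $a$ in $p$. I would then argue, via an exchange argument that reassigns $I$-tasks within their overlapping ranges, that any feasible placement of $V$-tasks can be transformed into one drawn from an equivalence class containing only $f(k)$ canonical placements per $\tau$, each testable in polynomial time by solving the induced transportation sub-problem. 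The principal technical difficulty will be precisely this state-space analysis: showing that the admissible region in the $k$-dimensional search space for $\pi$ decomposes into only $f(k)$ structurally equivalent classes, and identifying a canonical representative of each. Together with the $k!$ choices of $\tau$, this yields an overall runtime of $k!\cdot f(k)\cdot\mathrm{poly}(n)$, confirming fixed-parameter tractability.
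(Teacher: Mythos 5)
Your setup matches the paper's first move (compute a vertex cover $V$ in FPT time and branch over the $k!$ orderings of $V$ compatible with $\prec^+$), and your reduction of the per-ordering subproblem to an interval-constrained assignment of the $I$-tasks is sound: since $I$ is independent in $G_\prec$, every comparability involving an $I$-task is witnessed through a $V$-task, so the segment ranges $[\ell(t),r(t)]$ do capture all ordering constraints. The genuine gap sits exactly where you flag ``the principal technical difficulty'': you assert, without argument, that the feasible placements $\pi_1<\cdots<\pi_k$ of the $V$-tasks in $p$ fall into only $f(k)$ equivalence classes admitting computable canonical representatives. Nothing in your write-up establishes this, and it is far from obvious --- the Hall-type feasibility conditions are indexed by $O(k^2\cdot|A|)$ pairs of an action and a segment range, and each depends on prefix counts $F_a(\cdot)$ taking up to $n$ distinct values, so there is no a priori reason the induced partition of the search space has only $f(k)$ cells, let alone cells with identifiable representatives. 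As written, your algorithm is only XP, via the $\binom{n}{k}$ enumeration you correctly reject.

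The paper closes this gap differently: it never guesses the positions of the $V$-tasks at all. For each ordering $v_1,\ldots,v_{|V|}$ of the cover it assigns a priority $\psi(t)=\min\{i : (t,v_i)\in\prec\}$ to each $t\notin V$ (with $\psi(t)=\infty$ if $t$ has no out-neighbor in $V$, and $\psi(v_i)=i+0.5$), then sweeps through $p$ once, at each index choosing an executable task of the required action with minimum $\psi$. A position-by-position exchange argument shows this greedy choice is safe: any valid assignment agreeing with the greedy one on the first $i$ positions can be modified, by swapping the greedy task $t$ with the task $t^*$ the valid assignment placed at position $i+1$, to agree on the first $i+1$ positions; the inequality $\psi(t)\le\psi(t^*)$ guarantees that after the swap $t^*$ still precedes $v_{\psi(t^*)}$, so no edge into $V$ is violated, and $t^*\notin V$ ensures no other edges matter. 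If you want to salvage your route, you would need to prove an analogous exchange lemma showing that one specific (say, leftmost-feasible) placement of the $V$-tasks is always attainable from any feasible one; that lemma is the actual content of the theorem and is currently missing from your proposal.
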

\begin{proof}[Proof Sketch]
    We must order the tasks so that they respect~$\prec$ and correspond to the input plan $p$.
    We branch over all orderings of the tasks in a minimal vertex cover \(V\) and discard those which do not respect $\prec$. 
    Let the \emph{priority} of a task be higher the earlier one of its out-neighbors in $V$ is placed in the ordering. Tasks in $V$ have priority just below the last of their in-neighbors, and tasks outside $V$ with no out-edges have no priority.
    We iterate through $p$ and, at each step, select a task for that position that has the respective action and has all of its predecessors in $\prec$ already selected. 
    If there are multiple candidates, we choose a task with highest priority, ensuring that order constraints are met as early as possible.    
\end{proof}

\begin{theorem}
\label{thm:reachtarget_includeaction_fpt}
    \ptarget and \pcover on primitive networks are fixed-parameter tractable when parameterized by the vcn of $G_\prec$ plus the number of states in the state transition graph.
\end{theorem}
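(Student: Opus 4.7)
The plan is to combine the vertex-cover branching of the proof of \cref{thm:planverif_vertexcover} with the state-transition-graph ILP technique developed for antichains in \cref{thm:includeaction_fpt_empty}. Let $v$ denote the vertex cover number of $G_\prec$ and $k$ the number of states; the combined parameter is $v+k$. Compute a minimum vertex cover $V$ in \FPT time. As in \cref{thm:includeaction_fpt_empty}, partition the actions into at most $(k+1)^k$ equivalence classes by their effect on the state transition graph. Each task $t$ outside $V$ is an antichain element whose order-neighbours all lie in $V$, so once $V$ is linearly ordered, the positions (\emph{slots}) in which $t$ may be placed form an interval determined by $t$'s latest predecessor and earliest successor in $V$. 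We classify the tasks outside $V$ by their \emph{type}, a pair (equivalence class, slot interval), whose count is bounded by a function of $v+k$.

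Next, branch over the skeleton of the sought solution. Since a solution need not be full but must be downward-closed in $\prec^+$, enumerate downward-closed subsets $V'\subseteq V$ together with a linear extension $\pi=(v_1,\dots,v_{v'})$ of $\prec^+$ on $V'$; there are at most $2^v\cdot v!$ such choices. For each $\pi$, additionally guess the state $\tau_i$ reached at the end of each of the $v'+1$ slots, for a factor of $k^{v+1}$; this also determines the start state $\sigma_i$ of each slot since actions are deterministic. Discard branches inconsistent with the preconditions or effects of any $v_i$, or, in the case of \ptarget, with $\tau_{v'}\supseteq s_g$. Finally, for each slot guess a \emph{walk shape} in the state transition graph (a simple $\sigma_i$-to-$\tau_i$ path together with a connected selection of simple cycles); as in the antichain case, there are only constantly many shapes per slot.

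For each completed guess, set up an ILP with variables $x_{i,j}$ (tasks of type $j$ placed in slot $i$) and $y_{i,C}$ (number of traversals of cycle $C$ in slot $i$). The constraints enforce supply bounds $\sum_i x_{i,j}\le$ (number of available tasks of type $j$); slot-balance, requiring for each slot $i$ and action equivalence class $e$ that the count induced by the main path and cycle traversals in slot $i$ equals the count of $e$-class tasks allocated to slot $i$; and, for \pcover, that the total per-class usage covers $S$. The ILP has dimension bounded by a function of $v+k$, so its feasibility is decidable in \FPT time by Lenstra's algorithm, and within-slot correctness follows from the argument of \cref{thm:includeaction_fpt_empty} that any task counts consistent with some walk shape can be linearized into an executable sequence.

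The main obstacle is coordinating the slots: tasks outside $V$ can be eligible for several slots and compete for a shared supply, so per-slot analyses cannot be decoupled naively. The global ILP with shared supply constraints is what resolves this coupling, while the slot-wise walk-shape branching keeps the number of ILP variables bounded; striking the right balance between what is branched on and what is left to the ILP is the technical crux of the proof.
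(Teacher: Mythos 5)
Your proposal is correct and follows essentially the same route as the paper's proof: branch over the subset, order, and execution states of the vertex-cover tasks, classify the remaining tasks by action-equivalence class together with their admissible slot interval (the paper's ``strong equivalence''), guess per-slot simple paths and connected cycle sets in the state transition graph, and resolve the shared task supply via a bounded-dimension ILP solved with Lenstra's algorithm. The only (immaterial) difference is bookkeeping: the paper bakes the slot intervals into the edge labels of an augmented state transition graph and uses a single traversal variable per cycle totalled over slots, whereas you keep per-slot cycle variables and explicit allocation variables $x_{i,j}$.
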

\begin{proof}[Proof Sketch]
    We branch over the order of vertices in the vertex cover as in \cref{thm:planverif_vertexcover}, but need more involved arguments for constructing the task sequence. Inspired by the proof of \cref{thm:includeaction_fpt_empty}, we carefully branch on which paths and cycles the sequence takes in the state transition graph in between visiting the tasks of the vertex cover, and use an ILP to determine the number of times each cycle is traversed. 
\end{proof}

\paragraph*{A Meta-Theorem for Fixed-Parameter Tractability.}

As our final contribution, we show that the algorithmic meta-theorem for polynomial-time solvability can be lifted to the setting of fixed-parameter tractability if we also restrict the maximum ``breadth'' of a compound task, i.e., $\compc$.
To do this, we first need to define a suitable notion of stability. We call~$\kappa$ a \emph{stable fixed-parameter tractability measure} for a decomposable HTN problem \textsc{PR} if $\kappa$ is a stable measure and \textsc{PR} is fixed-parameter tractable on primitive task networks parameterized by $\kappa$. The vcn is an example of this notion as the addition of $c$ vertices to a graph increases its vcn by at most~$c$.

\begin{observation}
    The vcn is a stable fixed-parameter tractability measure for \pverify. 
    The sum of the  vcn and the number of states in the state space is a stable fixed-parameter tractability measure for \pcover and \ptarget.
\end{observation}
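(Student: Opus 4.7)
The plan is to verify two conditions for each claim: stability of the measure, and the existence of a fixed-parameter algorithm for primitive networks. The latter is already provided by \Cref{thm:planverif_vertexcover} for \pverify and by \Cref{thm:reachtarget_includeaction_fpt} for \pcover and \ptarget, so the substantive step is the stability argument.

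For the vertex cover number, let $tn$ be an HTN with a compound task $t$, let $tn^*$ be a primitive HTN on $c$ tasks, and let $tn'$ be the HTN obtained from $tn$ by decomposing $t$ into $tn^*$. Let $V$ be a minimum vertex cover of $G_\prec$ and let $T^*$ denote the $c$ new task vertices introduced by the decomposition. I would show that $V' := (V \setminus \{t\}) \cup T^*$ is a vertex cover of the cover graph of $tn'$: every edge in $tn'$ is either (i) an edge of $tn$ not incident to $t$, hence covered by the endpoints remaining in $V$, (ii) an edge between a vertex of $T^*$ and a former neighbor of $t$, covered either by the $T^*$ endpoint or (in case $t \notin V$) by the neighbor already in $V$, or (iii) an edge internal to $T^*$, covered by $T^*$. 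Thus the vcn of $tn'$ is at most $\mathrm{vcn}(tn) + c$, witnessing stability via $f(k,c) = k+c$.

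For the number of states in the state transition graph, I would observe that this graph is determined entirely by the domain components $F$, $A$, $\delta$, and the initial state $s_0$, none of which are altered by decomposing a task. Hence the number of states is invariant under decomposition, and the sum of the vcn and the number of states is stable with the same $f(k,c) = k+c$.

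Combining stability with the fixed-parameter algorithms cited above yields the observation. I do not anticipate a real obstacle here; the only minor subtlety is case-splitting on whether $t$ itself lies in the minimum vertex cover $V$ when bounding the vcn after decomposition, which the explicit construction of $V'$ handles uniformly.
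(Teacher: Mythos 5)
Your proposal is correct and matches the paper's (very terse) justification: the paper proves this observation only via the preceding remark that adding $c$ vertices increases the vcn by at most $c$, combined with \cref{thm:planverif_vertexcover,thm:reachtarget_includeaction_fpt}, and your explicit construction of $V' = (V\setminus\{t\})\cup T^*$ together with the invariance of the state transition graph under decomposition is exactly that argument spelled out. If anything, your version is slightly more careful than the paper's, since it accounts for the removal of $t$ rather than only the addition of the $c$ new tasks.
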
 

We now establish the parameterized analog of \cref{thm:meta_for_compounds}, which extends \cref{thm:planverif_vertexcover,thm:reachtarget_includeaction_fpt} to compound networks:

\begin{theorem}
\label{thm:meta_for_compounds_fpt}    
    Let {\sc PR} be a decomposable HTN problem and $\kappa$ a stable fixed-parameter tractability measure. 
    If $\compd \neq \infty$, then {\sc PR} is fixed-parameter tractable by the combined parameters $\compcdns,$ and $\kappa(tn,D)$ for the initial network $tn$. 
\end{theorem}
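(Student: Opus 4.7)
The plan is to mirror the proof of \cref{thm:meta_for_compounds} essentially verbatim, replacing the inner polynomial-time algorithm for primitive instances with the assumed fixed-parameter algorithm and then bookkeeping the resulting runtime so that all explosions in the parameters are relegated to a single computable function of $\compcdns$ and $\kappa(tn,D)$.

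Concretely, given an instance $\I$ with initial task network $tn=(T,\prec^+,\alpha)$, I would proceed as follows. First, if $tn$ is already primitive, I invoke the assumed fixed-parameter algorithm for \textsc{PR} parameterized by $\kappa$ and halt. Otherwise, since $\compd\ne\infty$, I apply \cref{lem:howmanydecompositions} to enumerate all pairwise non-isomorphic primitive task networks $tn'$ that $tn$ can be decomposed into. By that lemma, there are at most $\compc^{\sum_{i=0}^{\compd-1}\compn\cdot\comps^i}$ such networks, which is a computable function of $\compcdns$ alone, and each produced network has size bounded by $|\I| + f''(\compdns)$ for some computable $f''$. Since \textsc{PR} is decomposable, the original instance is a yes-instance if and only if at least one of these primitive instances $\I_p$ is.

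Next, for every enumerated primitive instance $\I_p$ with network $tn_p$, I bound the parameter $\kappa(tn_p, D)$ by iterating the stability condition: $tn_p$ is obtained from $tn$ by at most $g(\compdns) := \sum_{i=0}^{\compd-1}\compn\cdot\comps^i$ successive decompositions, each inserting a primitive subnetwork of at most $\comps^{\compd}$ tasks. Hence, by $g(\compdns)$-fold application of the function from the definition of stability, there exists a computable function $f'''$ with $\kappa(tn_p,D)\le f'''(\kappa(tn,D),\compdns)$. I then invoke the assumed fixed-parameter algorithm for \textsc{PR} on primitive instances on each $\I_p$; this takes time at most $h(\kappa(tn_p,D))\cdot |\I_p|^{\bigoh(1)}$ for some computable $h$.

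Multiplying through and bounding $|\I_p|\le |\I|+f''(\compdns)$ and $\kappa(tn_p,D)\le f'''(\kappa(tn,D),\compdns)$, the total running time is
\begin{equation*}
\compc^{g(\compdns)} \cdot h\bigl(f'''(\kappa(tn,D),\compdns)\bigr) \cdot \bigl(|\I|+f''(\compdns)\bigr)^{\bigoh(1)},
\end{equation*}
which is of the form $F(\compcdns,\kappa(tn,D))\cdot |\I|^{\bigoh(1)}$ for a single computable function $F$, establishing fixed-parameter tractability w.r.t.\ the claimed combined parameter. I do not anticipate a genuine obstacle here: the only subtle point is ensuring that the blow-up of $\kappa$ under iterated decompositions remains bounded by a computable function of the parameters, which is exactly what the stability hypothesis delivers; the rest is bookkeeping identical in spirit to \cref{thm:meta_for_compounds}.
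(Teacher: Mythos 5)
Your proposal matches the paper's proof essentially verbatim: both enumerate the at most $\compc^{\sum_{i=0}^{\compd-1}\compn\cdot\comps^i}$ primitive decompositions via \cref{lem:howmanydecompositions}, bound each instance's size and its $\kappa$-value via stability, and then invoke the assumed fixed-parameter algorithm on each, multiplying through to get a running time of the form $F(\compcdns,\kappa(tn,D))\cdot|\I|^{\bigoh(1)}$. The argument and bookkeeping are correct and take the same route as the paper.
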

\begin{proof}
    Let \(tn=(T,\prec^+,\alpha)\).
    If there are no compound tasks in \(T\), i.e., \(\compd=0\), the statement immediately holds by applying an algorithm for primitive networks.
    Thus, assume \(\compd\ge 1\). 
    Then, \textsc{PR} in \(tn\) is solved by listing all primitive instances \(tn\) can be decomposed into and solving \(\textsc{PR}\) in each of these.
    By Lemma~\ref{lem:howmanydecompositions}, there are at most \(\compc^{\sum_{i=0}^{\compd-1} \compn\cdot \comps^i}\) such instances \(\I_p\) and, for each of these, \(|\I_p|\le |\I| + f''(\compdns)\) for some computable function \(f''\).
    Further, for each initial task network \(tn_p\) of an instance \(\I_p\), we have \(\kappa(tn_p, D) \le f'''(\kappa(tn, D),\compdns)\), for some computable function \(f'''\) as \(\kappa\) is a stable measure.
    Thus, if primitive instances \(\I_p\) can be solved in time \(f(\kappa(tn_p,D)) |\I_p|^\bigO{1}\), then the total running time can be upper-bounded by $\bigO{\compc^{g(\compdns)} \cdot f'(\kappa(tn,D),\compdns)\cdot |\I|^{\bigO{1}}}$ for computable functions $g$ and $f'$. 
\end{proof}

\section{Concluding Remarks}
This article provides a comprehensive understanding of the complexity-theoretic landscape of several fundamental problems on HTNs. Our results include strong algorithmic lower bounds and complementary positive results---not only for specific problems, but also in the form of meta-theorems that can be used for other HTN problems. 
For further work, while we have provided lower bounds justifying all the restrictions applied in the algorithmic meta-theorem for polynomial-time solvability (\cref{thm:meta_for_compounds}), in the more refined fixed-parameter tractability setting we leave it open whether the ``hierarchical breadth'' $\compc$ needs to be part of the parameterization in order to establish \cref{thm:meta_for_compounds_fpt}.

\subsubsection{Acknowledgements}
This research was funded in whole or in part by the Austrian Science Fund (FWF) [10.55776/Y1329].

\bibliographystyle{named}
\bibliography{ref_abbrev}

\iflong
\newpage
\appendix
\section{Appendix: Proof Details}

\begin{proof}[\textbf{Proof of \cref{thm:target_cover_chains_hard}}]
    We reduce from \textsc{Shuffle Product} with the alphabet \(\Sigma = \set{a,b}\).
    Construct a task for each letter in the words \(u,c_1,\ldots,c_w\), i.e., for each word \(x\in \set{u,c_1,c_2,\ldots,c_w}\), let the task \(t_{x,j}\) refer to its \(j^{\text{th}}\) letter.
    Let \(T_c\) be the set of tasks created for letters in \(c_1,\ldots,c_w\), and \(T_u\) the set of tasks \(t_{u,j}\), \(j\in|u|\).
    Let \(\prec^+\) be such that the tasks for each word form a chain in the respective order, i.e., let \(\prec\) consist of the edges \((t_{x,j},t_{x,j+1})\) for each \(x\in\set{u,c_1,\ldots,c_w}\) and \(j\in [|x|-1]\). 
    Further, add a task \(t_g\) and let \((t_{u,|u|}, t_g)\in \prec\).
    Then, \(\prec^+\) partitions \(T\) into \(w+1\) chains (see \cref{fig:reduction_wordshuffle}).

    For the sake of visualization and the corresponding naming of actions and propositional states, we consider the chains for \(c_1,\ldots,c_w\) to be on some {\it left} side and the chain for \(u\) to be on some {\it right} side. 
    Let \(A= \set{a^L_a,a^L_b,a^R_a,a^R_b, a_g}\) with the preconditions and effects of these actions described in \Cref{tab:actions_wordshuffle_verification}.
    \begin{table}
        \centering
        \begin{tabular}{|c|c|c|c|} 
            \hline
            \diagbox{Action}{Prop. set} & $\pre$ & $\del$ & $\add$ \\ 
            \hline
            $a^L_a$ & $\set{\text{LEFT}}$ & $\set{\text{LEFT}}$ & $\set{a}$ \\ 
            \hline
            $a^L_b$ & $\set{\text{LEFT}}$ & $\set{\text{LEFT}}$ & $\set{b}$ \\ 
            \hline
            $a^R_a$ & $\set{a}$ & $\set{a}$ & $\set{\text{LEFT}}$ \\ 
            \hline
            $a^R_b$ & $\set{b}$ & $\set{b}$ & $\set{\text{LEFT}}$ \\ 
            \hline
            $a_g$ & $\emptyset$ & $\emptyset$ & $\set{\text{GOAL}}$ \\ 
            \hline
        \end{tabular}
        \caption{The actions in the proof of Thm.~\ref{thm:target_cover_chains_hard}.}\label{tab:actions_wordshuffle_verification}
    \end{table}
    If the \(j^{\text{th}}\) letter of \(u\) is an \(a\), then we set \(\alpha(t_{u,j})=a^R_a\), and otherwise, \(\alpha(t_{u,j})=a^R_b\).
    Similarly, for each \(i\in[w]\), if the \(j^{\text{th}}\) letter in \(c_i\) is an \(a\), then we set \(\alpha(t_{c_i,j})=a^L_a\), and otherwise, \(\alpha(t_{u,j})=a^L_b\).
    Further, let \(\alpha(t_g) = a_g\) and \(s_0=\set{\text{LEFT}}\).
    For \ptarget, we let \(s_g = \set{\text{GOAL}}\).

    We prove that these instances are yes-instances if and only if the given \textsc{Shuffle Product} instance is a yes-instance. 
    Suppose \(u\) is a shuffle product of \(c_1,\ldots,c_w\).
    Then, each letter in the words \(c_1,\ldots,c_w\) is assigned a unique position with a matching letter in \(u\) such that the assigned positions of the letters in each word \(c_i\), \(i\in[w]\), are in order.
    Let \(t_{\ell}\) refer to the task that corresponds to the letter that is assigned to position \(\ell\). 
    Consider the task sequence \(t_1, t_{u,1}, t_2, t_{u,2}, \ldots, t_{|u|}, t_{u,|u|}, t_g\).
    This sequence does not conflict with \(\prec^+\).
    Further, the action of each task is executable in its respective state since the state alternates between \(\set{\text{LEFT}}\) and either \(\set{a}\) or \(\set{b}\), depending on which letter was seen last.
    As the letters in \(c_1,\ldots,c_w\) are assigned to positions with a matching letter in \(u\), the task \(t_{u,j}\) is always executable after the execution of \(t_{\ell}\), for all \(\ell\in [|u|]\), and in turn enables the execution of \(t_{\ell+1}\).
    After the execution of \(t_g\), the target state \(s_g\) is reached and all the tasks are executed, and thus, the reduction instance is a yes-instance.

    For the other direction, suppose there is a task sequence solving the \ptarget or \pexist instance. 
    Consider the first part of the solution sequence up to the task \(t_g\).
    All the tasks in \(T_u\) are executed before \(t_g\), and \(\prec^+\) ensures that they are executed in the order of their corresponding letters in \(u\).
    Note that, by the effects and preconditions of the actions, the sequence starts with a task in \(T_c\) and alternates between tasks in \(T_c\) and \(T_u\). 
    Further, a task in \(T_c\) is always followed by a task in \(T_u\) that corresponds to the same letter.
    Thus, listing all the letters in the words \(c_1,\ldots,c_w\) in the order their corresponding tasks are executed in the task sequence, gives \(u\).
    As \(\prec^+\) ensures that the letters from each word \(c_i\), \(i\in[w]\), are listed in the order they appear in \(c_i\), it is a yes-instance of \textsc{Shuffle Product}.

    Note that the only states in the state transition graph are \(\set{\text{LEFT}}\), \(\set{a}\), \(\set{b}\), and \(\set{\text{LEFT}, \text{GOAL}}\).
    Also, the instance is constructed in polynomial time and has \(w+1\) chains.
\end{proof}

\paragraph*{Action Executability on primitive Antichains (Thm. \ref{thm:includeaction_fpt_empty}).}

Let $k = O(1)$ be the number of states in the state transition graph.
We say that two actions $a_1,a_2$ are \emph{equivalent} if they are executable from the same states and lead to the same state, i.e., for all states $s,s'$, there is an edge (with label) $a_1$ from $s$ to $s'$ if and only if there is an edge $a_2$ from $s$ to $s'$ in the state transition graph.
We note that there are at most \((k+1)^k\) equivalence classes because each class is identified by a vector \(D\) of length \(k=|\calS|\) over \(\calS \cup \set{\emptyset}\). 
Thereby, \(D[i]=s\) is to be interpreted as the equivalence class transforms the \(i^{\text{th}}\) state into state \(s\), and \(D[i]=\emptyset\) implies the class does not transform the \(i^{\text{th}}\) state into anything as its preconditions are not fulfilled. 
This allows us to define a reduction rule reducing the number of different actions. 
Reduction rules are \emph{safe} if they preserve yes- and no-instances.

For a primitive task network \(tn=(T,\prec^+,\alpha)\) and action \(a\in A\), let \(T_a = \set{t\in T\mid \alpha(t)=a}\) and \(\max(a) = |T_a|\).
For \pcover instances with action multiset \(S\), let \(\min(a)\) be the number of occurrences of \(a\) in \(S\).
A primitive \pcover instance is a yes-instance if and only if there is a task sequence that respects $\prec^+$, is executable in \(s_0\) and in which each action \(a\) occurs at least \(\min(a)\) and at most \(\max(a)\) times. 
We now provide two reduction rules for \pcover.

\paragraph*{Reduction Rule R0:}
For a primitive \pcover instance, if there is an action \(a\in A\) such that \(\min(a) > \max(a)\), then output ``no''.

\paragraph*{Reduction Rule R1:} 
Consider a primitive \pcover instance where Reduction Rule R0 cannot be applied. Let \(a_1,a_2\in A\) with \(a_1\neq a_2\) be equivalent actions. 
If, for all \(t_1\in T_{a_1}, t_2\in T_{a_2}, t\in T\), the following holds:

\begin{itemize}
\item \((t,t_1)\in \prec^+\) if and only if \((t,t_2)\in \prec^+\) and 
\item \((t_1,t)\in \prec^+\) if and only if \((t_2,t)\in \prec^+\),
\end{itemize}
then change \(\alpha\) such that, for all \(t_2\in T_{a_2}\), now \(\alpha(t_2)=a_1\), and
replace all occurrences of \(a_2\) in \(S\) by \(a_1\).

\begin{claim}
    \textup{Reduction Rule R1} is safe.
\end{claim}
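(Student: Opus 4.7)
The plan is to verify both implications of the claimed equivalence between the original \pcover instance and the reduced one, exploiting that R1's hypotheses force $T_{a_1} \cup T_{a_2}$ to be a fully interchangeable block of tasks. A useful preliminary observation will be that taking $t = t_1' \in T_{a_1}$ (with $t_1' \neq t_1$), or symmetrically $t = t_2' \in T_{a_2}$, in the hypothesis of R1 and using irreflexivity of $\prec^+$ forces all tasks in $T_{a_1} \cup T_{a_2}$ to be pairwise incomparable and to share identical external predecessors and successors in $\prec^+$. Combined with the equivalence of $a_1$ and $a_2$ on the state transition graph, this means that any bijection on the positions occupied by $T_{a_1} \cup T_{a_2}$-tasks in a sequence preserves both executability in $s_0$ and the $\prec^+$ constraints.

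For the forward direction, I will start with a witness $\sigma$ for the original instance and argue that the same underlying task sequence, with the actions reinterpreted under the reduced labeling (so that $T_{a_2}$-tasks also carry action $a_1$), is a witness for the reduced instance. Executability is inherited because $a_1$ and $a_2$ induce identical state transitions, $\prec^+$ is untouched, and the multiplicity bounds collapse additively: the count of $a_1$-labeled tasks in the relabeled sequence lies in $[\min(a_1)+\min(a_2),\,|T_{a_1}|+|T_{a_2}|]$, which is exactly the interval required by the reduced multiset~$S$ and the merged $T_{a_1}$.

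The backward direction is the subtler part. Given a witness $\sigma'$ for the reduced instance, let $n$ be the number of positions in $\sigma'$ occupied by tasks from $T_{a_1} \cup T_{a_2}$. Coverage in the reduced instance gives $n \ge \min(a_1)+\min(a_2)$, and trivially $n \le |T_{a_1}|+|T_{a_2}|$. I will then argue that there exist integers $n_1', n_2'$ with $n_1'+n_2'=n$ and $\min(a_i) \le n_i' \le |T_{a_i}|$ for $i \in \{1,2\}$; non-emptiness of the feasible range follows by combining the two displayed bounds with the inequalities $\min(a_i) \le |T_{a_i}|$, which are guaranteed because R0 has already been applied. Using the interchangeability established at the outset, I will rewrite $\sigma'$ by reassigning task identities so that exactly $n_1'$ of the $n$ affected positions are filled by distinct tasks from $T_{a_1}$ (keeping action $a_1$) and $n_2'$ by distinct tasks from $T_{a_2}$ (carrying action $a_2$ under the original labeling). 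The resulting sequence then witnesses the original instance.

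The main obstacle I anticipate is making the ``swap'' in the backward direction airtight: after reassigning task identities I must certify that the new sequence still respects $\prec^+$ and remains executable from $s_0$. Both facts follow cleanly from the preliminary observation, but the argument genuinely relies on the \emph{derived} intra-class incomparability and on the equivalence of $a_1, a_2$ as state-graph actions, rather than on the literal statement of R1 alone. This is the step where I will need to be most explicit, and I expect the rest of the proof to reduce to bookkeeping of multiplicities.
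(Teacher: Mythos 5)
Your proposal is correct and follows essentially the same route as the paper's proof: the forward direction is a relabeling argument, and the backward direction combines the counting bound $\min(a_1)+\min(a_2)\le n\le |T_{a_1}|+|T_{a_2}|$ (using that R0 does not apply to guarantee a feasible split $n_1'+n_2'=n$) with the interchangeability of $a_1$- and $a_2$-tasks. Your preliminary derivation of intra-class incomparability from R1's hypothesis via irreflexivity is a welcome explicit justification of the interchangeability step that the paper merely asserts, but it does not change the argument's structure.
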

\begin{proof}
    Let \(\max_r(a)\) and \(\min_r(a)\) be defined as above, but for the reduced instance.
    Hence, \(\max_r(a_1) = \max(a_1) + \max(a_2)\) and \(\min_r(a_1) = \min(a_1) + \min(a_2)\).
    Thus, any task sequence that solves the original instance also solves the reduced instance.  
    Now assume there is a task sequence that solves the reduced instance.
    If the same task sequence solves the original instance, we are done. 
    Otherwise, let \(x_1\) and \(x_2\) be the number of occurrences of \(a_1\)- and \(a_2\)-tasks, respectively, in the sequence resolving the reduced instance according to the original mapping \(\alpha\).
    Note that \(\max_r(a_1)=\max(a_1)+\max(a_2)\ge x_1+x_2 \ge \min(a_1)+\min(a_2)\).
    Due to the equivalence in their actions and neighborhoods in \(\prec^+\), \(a_1\)- and \(a_2\)-tasks can be used interchangeably in the sequence. Thus, with the above inequalities, we can find a valid solution for the original instance, i.e., one in which $\max(a_1)\ge x_1 \ge \min(a_1)$ and $\max(a_2)\ge x_2 \ge \min(a_2)$.
\end{proof}

Note that an instance of \pcover with $|S| > |T|$ is a trivial no-instance, so we assume $|S| \le |T|$.

\begin{proof}[\textbf{Proof of \cref{thm:includeaction_fpt_empty}}]
    We begin by exhaustively applying \textup{Reduction Rules R0 and R1}, which takes at most $\bigoh(|\I|^3)$ time. Afterwards, we have either correctly solved the instance, or---since the network is an antichain---obtained a new equivalent instance where the number of actions is bounded by a function of the number $k$ of states, in particular $|A|\leq (k+1)^k$.
    
    The algorithm now branches over the at most \(2^k \cdot |A|^k\) simple edge paths in \(\G\) that start at \(s_0\).
        Let \(C\) be the set of distinct simple edge cycles in \(\G\), where two cycles are distinct if and only if they do not contain the same vertices.
        Then, \(|C| \le 2^k \cdot |A|^k\). 
        The algorithm branches on all possible subsets \(C' \subseteq C\) to decide which of these cycles are used at least once by a solution sequence.
        To ensure there is a walk, set all cycles in \(C'\) as unmarked and exhaustively mark cycles that share a vertex with the chosen path or a marked cycle.
        Proceed with \(C'\) only if all cycles are marked by this procedure.
        
        We write an \emph{Integer Linear Program} (ILP) as follows. Let the ILP have a variable \(x_c\) for every \(c\in C'\). 
        For each action \(a\) and \(c\in C'\), let \(c(a)\) denote the number of occurrences of \(a\) in \(c\).
        Further, let \(p(a)\) denote the number of occurrences of \(a\) in the chosen path.
        For each \(a\in A\), add constraints to require
        \begin{align*}
            \min(a) \le p(a) + c_1(a)x_{c_1} + \dots c_{|C'|}(a)x_{c_{|C'|}} &\le \max(a),
        \end{align*}
        where \(c_1,\ldots,c_{|C'|}\) are the elements of \(C'\).
        For all \(c\in C'\), add a constraint \(x_{c} \ge 1\) to ensure each of the selected cycles is walked at least once.
        The given instance is a yes-instance if and only if there is a solution to the ILP.
    
        If the ILP has a solution, construct a plan \(p\) as follows. 
        Start with the chosen path.
        For each cycle \(c\in C'\), add \(x_c\) repetitions of that cycle at the first time a vertex of \(c\) appears in the path.
        As the cycles end in the same state they started in, the resulting action sequence is executable.
        Possibly, some cycles cannot be added directly, but only after other cycles have been added.
        Nevertheless, as all cycles in \(C'\) were marked by the algorithm, and the ILP ensures that each cycle in \(C'\) is executed at least once, all cycles can be added in this way.
        Further, the ILP ensures that each action \(a\) appears at least \(\min(a)\) and at most \(\max(a)\) times in the resulting executable action sequence.
        Thus, the given instance is a yes-instance.
    
        For the other direction, suppose there is an executable plan \(p\) such that each action \(a\) occurs at least \(\min(a)\) and at most \(\max(a)\) times in \(p\).
        Consider an initially empty set of cycles \(C^*\).
        Then, find a substring of \(p\) that describes a simple edge cycle in \(\G\). 
        Add this cycle to \(C^*\) while disregarding its starting vertex and its direction, remove the respective substring from \(p\), and repeat until no more such cycles exist.
        Observe that \(p\) then describes a simple path in \(\G\), and \(C^*\) exclusively contains simple edge cycles in \(\G\).
        There is a branch of the algorithm that chooses this simple path and has \(C' = C^*\).
        Further, as the cycles were iteratively removed from a walk in the state transition graph, the algorithm will properly mark all cycles in \(C'\).
        Also, the ILP has a valid solution by, for each cycle \(c\in C'\), setting \(x_c\) to the number of times this cycle was removed from \(p\).  
        Thus, in this case the algorithm will decide that the given instance is a yes-instance.
      
        Computing \(\min(a)\) and \(\max(a)\) for all \(a\in A\) takes \bigO{|T|+|S|} time. 
        Combining the choices for the path and \(C'\) gives at most 
        \(2^k\cdot |A|^k \cdot 2^{2^k\cdot |A|^k}\) branches.
        Testing a set \(C'\) by the marking procedure takes \bigO{|C|^2k} time, which is dominated by the time to solve the ILP.
        The ILP has at most \(|C|\leq 2^{k} \cdot |A|^k\) variables and \(2|A| + |C|=\bigO{2^k\cdot |A|^k}\) constraints, in which the absolute values of all coefficients are bounded by \(|T|\).
        By the result of \cite{Lenstra_1983} and its subsequent improvements \cite{Frank_Tardos_1987,Kannan_1987} an ILP instance $\I_{\text{ILP}}$ with \(n\) variables can be solved in \bigO{n^{O(n)}\cdot |\I_{\text{ILP}}|} time.
        We note that, for the given ILP, $|I_{\text{ILP}}| = \bigO{2^k\cdot |A|^k \cdot \log(|T|)}$.
        Thus, solving the ILP takes 
        \bigO{(2^{k} \cdot |A|^k)^{O(2^{k} \cdot |A|^k)} \cdot 2^{2k} \cdot
             |A|^{2k} \cdot \log(|T|)} time.
        Hence, the total runtime can be upper-bounded by
        \begin{align*}
            \mathcal{O}\Big(2^k\cdot |A|^k \cdot 2^{2^k\cdot |A|^k}\cdot (2^{k} \cdot |A|^k)^{O(2^{k} \cdot |A|^k)} \cdot 2^{2k} \cdot
            |A|^{2k} \cdot \\ \log(|T|)+|T|+|S|\Big)\\
            =\bigO{2^{2^k\cdot |A|^k}\cdot (2|A|)^{O(k \cdot 2^{k} \cdot |A|^k)} \cdot \log(|T|)+|T|}.
        \end{align*}
    Recalling the bound on $|A|$ in the first paragraph of the proof, we obtain that any primitive instance of \pcover with \(\prec^+ = \emptyset\) can be solved in time $\bigoh(f(k)\cdot \log(|\I|)+|\I|^3)$, for some computable function $f$.
        \end{proof}

        \begin{proof}[\textbf{Proof of \cref{thm:planverif_powidth_xp}}]
            Let $w$ be the generalized partial order width of the input network, $U$ the set of isolated elements in $T$, and recall that the set $V$ of all remaining elements in $T$ induces a partial order of width at most $w$.
                First, check whether \(p\) is executable from $s_0$.
                If not, then return that it is a no-instance.
                Otherwise, decompose \(D_\prec[V]\) into \(w\) chains \(c_1, \ldots, c_w\) in \bigO{|V|^{2.5}} time~\cite{Fellows_McCartin_2003}.
                For \(0\le i \le |T|\), let \(R_i[h_1,\ldots,h_w]\) be a boolean variable that is \True if and only if the sequence of the first \(i\) actions in \(p\) is executable in \(tn\) such that exactly the first \(h_j\) tasks are used in each chain \(c_j, j\in [w]\).
                We compute this variable for all possible assignments via dynamic programming.
                Initialize all values to \False except for \(R_0[0,\ldots,0]\), which is \True.
                This completes the calculations for \(i=0\). 
                Assume that the calculations for some arbitrary, but fixed \(i\) are complete. 
            
                Then, update the values for \(i+1\) as follows.
                Iterate over all assignments \(R_i[h_1,\ldots,h_w]\) that are \True.
                Let \(T'\subseteq U\cup V\) be such that \(t\in T'\) if and only if \(\alpha(t) = p[i+1]\) and, for each chain \(c_j\), all the predecessors of \(t\) in \(\prec^+\) in \(c_j\) are among the first \(h_j\) tasks in \(c_j\).
                Intuitively, \(T'\) contains the tasks that match the \((i+1)^\text{th}\) action in the plan and have that each of their predecessors in $\prec^+$ is among the first \(h_j\) tasks of a chain \(c_j\).
                Further, let \(V'\subseteq V\) be such that \(t\in V'\) if and only if \(\alpha(t) = p[i+1]\) and \(t\) is among the first \(h_j\) tasks in its chain \(c_j\), that is, \(V'\) contains the already executed tasks of the current action from the chains.
                Let \(o\) denote the number of occurrences of the action \(p[i+1]\) in the first \(i\) elements of the plan \(p\).
                If \(|U\cap T'| > o - |V'|\), then set \(R_{i+1}[h_1,\ldots,h_w]\) to be \True.
                Intuitively, this means that the next task can be taken from \(U\) since there is an available task in \(U\) that is not already required earlier in the plan.
                Also, the \((i+1)^\text{th}\) task may come from a chain.
                Thus, for each chain \(c_j, j\in[w],\) such that the \((h_j+1)^\text{th}\) task is in \(T'\), set \(R_{i+1}[h_1,\ldots,h_{j-1},h_j+1,h_{j+1},\ldots,h_w]\) to \True.
                Continue until the computations for \(i=|T|\) are completed.
                Then, the given instance is a yes-instance if and only if there is \(R_{|T|}[h_1,\ldots,h_w]\) that is \True.
            
                We show by induction over \(i\) that the algorithm sets all variables correctly.
                For \(i=0\), the algorithm sets all \(R_0[h_1,\ldots,h_w]\) correctly as the only way to build an empty plan is to use no tasks at all.
                Now, assume that the algorithm correctly assigned all variables up to an arbitrary, but fixed~\(i\).
                
                If the algorithm sets some value \(R_{i+1}[h_1,\ldots,h_w]\) to \True, then we distinguish between two cases.
                First, assume it does so starting from \(R_{i}[h_1,\ldots,h_w]\) being \True.
                Due to the induction hypothesis, there is an task sequence respecting $\prec^+$ that uses the first \(h_j\) tasks of each chain \(c_j\), \(j\in[w]\), and whose actions correspond to the first \(i\) actions of \(p\).
                Action \(p[i+1]\) occurs \(o\) times among these. 
                As \(|V'|\) contains all used tasks with this action from the chains, exactly \(o-|V'|\) executed tasks of this action are in \(U\).
                As \(T'\cap U\) contains all executable tasks in \(U\), if \(|T'\cap U| > o-|V|\), then there is a task \(t\in U\) that is executable for action \(p[i+1]\) and has not already been used. 
                Thus, it is correct to set \(R_{i+1}[h_1,\ldots,h_w]\) to \True.
                In the other case, there is \(j\in[w]\) such that \(R_{i}[h_1,\ldots,h_{j-1},h_j-1,h_{j+1},\ldots,h_w]\) is \True and the \(h_j^\text{th}\) action of the chain \(c_j\) is executable.
                Thus, in this case it is also correct to set \(R_{i+1}[h_1,\ldots,h_w]\) to \True.
            
                For the other direction, suppose there is a sequence \(\bar{t}\) of \(i\) distinct tasks in \(T\) of the required actions that uses the first \(h_j\) tasks of each chain \(c_j\), that is, \(R_{i}[h_1,\ldots,h_w]\) should be \True.
                Let the last task in this sequence be \(t\) and let \(\bar{t}'\) denote the sequence \(\bar{t}\) without \(t\).
                If \(t\in U\), then the existence of \(\bar{t}'\) implies that \(R_{i-1}[h_1,\ldots,h_w]\) is \True, and in the computations starting from there we have \(|T'\cap U| > o-|V|\).
                If \(t\in V\), then \(t\) is the \(h_j^\text{th}\) task of a chain \(c_j\), \(R_{i-1}[h_1,\ldots,h_{j-1},h_j-1,h_{j+1},\ldots,h_w]\) is \True, and in the computations starting from there we have \(t\in T'\).
                In both cases, the algorithm correctly sets \(R_{i}[h_1,\ldots,h_w]\) to \True.
            
                This procedure sets \bigO{|T|\cdot |T|^w} values to \True.
                For each of these, computing \(T',V',\) and \(o\) takes \bigO{|T|+|{\prec}|} time and it can update at most \(w+1\) variables.
                Hence, setting all variables takes \(\bigO{|T|^{w+1}(|T|+|{\prec}|+w+1)} = \bigO{|T|^{w+1}(|T|+|{\prec}|)}\) time, which dominates the \bigO{|V|^{2.5}} time it takes to decompose \(D_\prec[V]\) into $w$ chains. In particular, the running time can be upper-bounded by $|\I|^{\bigoh(w)}$.
            \end{proof}
    
\begin{proof}[\textbf{Proof of \cref{thm:target_cover_powidth_xp}}]
        For an \pcover instance, first test whether Reduction Rule R0 can be applied, and if yes we use it to terminate.
        Let $k$ be the number of states and $w$ the generalized partial order width of the input instance. Set $U$ to be the set of isolated elements in $T$, and recall that the set $V$ of all remaining elements in $T$ can be decomposed into \(w\) chains \(c_1, \ldots, c_w\).
        This takes \bigO{|V|^{2.5}} time~\cite{Fellows_McCartin_2003}.
        Let \(A=\set{a_1,\ldots,a_{|A|}}\) and let \(E=\set{e_1,\ldots,e_{|E|}}\) denote the set of equivalence classes among all actions, where action equivalence is defined as in the proof of \cref{thm:includeaction_fpt_empty}.
        For each state \(s\) in the state transition graph, let \(R_s[h_1,\ldots,h_w][r_1,\ldots,r_{|E|}]\) be a boolean variable that is \True if and only if there is a sequence of tasks that respects $\prec^+$, is executable from \(s_0\), yields state~\(s\) and employs exactly the first \(h_j\) tasks of each chain \(c_j\), \(j\in [w]\), and contains exactly \(r_i\) tasks from \(U\) with action equivalence class \(e_i\), for all \(i\in [|E|]\).

        Assuming the variable is set correctly for all possible assignments, the instance can be decided as follows.
        For \ptarget, we have a yes-instance if and only if there are $h_1,\ldots,h_w, r_1,\ldots, r_w$ such that $R_{s_g}[h_1,\ldots,h_w][r_1,\ldots,r_{|E|}]$ is \True.
        For \pcover, for a variable $R_s[h_1,\ldots,h_w][r_1,\ldots,r_{|E|}]$ let \(c(a)\) denote the number of tasks in \(V\) with action \(a\) that are among the first \(h_j\) tasks of a chain \(c_j\).
        We have a yes-instance if and only if there is a \True variable such that, for each action $a$
        there are at least \(\min(a) - c(a)\) tasks of action \(a\) in \(U\) and
        for each action equivalence class \(e_i\in E\),
         we have 
         \(\sum_{a\in e_i} \max\set{0,\min(a) - c(a)} \le r_i\).
          
        We compute all variable values in a dynamic programming manner.
        Initialize all values to \False except for \(R_{s_0}[0,\ldots,0][0,\ldots,0]\), which is initialized to \True.
        Whenever a variable \(R_s[h_1,\ldots,h_w][r_1,\ldots,r_{|E|}]\) that was \False before is set to \True (including the initial one), also perform the following update step.
        Let \(T'\subseteq T\) be such that \(t\in T'\) if, for each chain \(c_j\), all the predecessors of \(t\) in \(\prec^+\) in \(c_j\) are among the first \(h_j\) tasks in \(c_j\).
        Consider all the states \(s'\) that are adjacent to \(s\) in the state transition graph \(\G\), including $s$ itself if it has a self-loop.
        If there is \(j\in[w]\) such that the \((h_j+1)^\text{th}\) task of the chain \(c_j\) is in \(T'\), has action \(a_i\), and \(a_i\) connects \(s\) and \(s'\) in \(\G\), then set 
        \(R_{s'}[h_1,\ldots,h_{j-1},h_j+1,h_{j+1},\ldots,h_w][r_1,\ldots,r_{|E|}]\) 
        to \True.
        This corresponds to taking the next task from a chain and adding it to a given sequence. 
        Another update rule corresponds to adding a task from \(U\) instead.
        If there is an action equivalence class \(e_i\) that connects \(s\) and \(s'\) in \(\G\), then define the following. 
        Let \(T_i = \set{t\in T\mid \alpha(t) \in e_i}\) and let \(v_i\) denote the number of tasks in \(V\cap T_i\) that are among the first \(h_j\) tasks of a chain \(c_j\).
        If \(r_i - v_i < |T_i\cap U|\), then set 
        \(R_{s'}[h_1,\ldots,h_w][r_1,\ldots,r_{i-1},r_i+1,r_{i+1},\ldots,r_{|E|}]\) 
        to \True.
        Once no more variables are updated and set to \True, the computation is complete.
    
        We show that the algorithm sets all variables correctly by an inductive argument over \(z = \sum_{i=1}^{w} h_i + \sum_{i=1}^{|E|} r_i\).
        When one \True variable sets another variable to \True, this increases this sum by exactly 1.
        For \(z=0\), the algorithm sets all variables correctly as the only way to build an empty plan is to take no tasks from any chain and remain in the initial state, and thus, the only \True variable in this case is \(R_{s_0}[0,\ldots,0][0,\ldots,0]\).
        Now, assume that the algorithm correctly assigned all variables up to some arbitrary, but fixed \(z\).
        Consider a variable \(R_{s}[h_1,\ldots,h_w][r_1,\ldots,r_w]\) with $\sum_{i=1}^{w} h_i + \sum_{i=1}^{|E|} r_i = z+1$.
        
        If the algorithm sets the variable to be \True, we distinguish between two cases.
        First, assume it does so starting from some \True variable \(R_{s'}[h_1,\ldots,h_w][r_1,\ldots,r_{i-1},r_i-1,r_{i+1},\ldots,r_{|E|}]\) with \(i\in[|E|]\).
        Then, \(a_i\) is executable in state \(s'\) and transforms it into \(s\).
        Further, due to the induction hypothesis, there is an executable task sequence that uses the first \(h_j\) tasks of each chain \(c_j\), \(j\in[w]\), and consists of \(r_\ell\) tasks of action \(a_\ell\), for each \(\ell\in [|A|]\setminus\set{i}\), and \(r_i-1\) tasks of action \(a_i\).
        Thus, it contains \(r_i - v_i\) tasks in \(U\) with action equivalence class \(e_i\).
        As there are more than \(r_i - v_i\) tasks of that equivalence class in \(U\) and their action is executable in the current state, one of the remaining tasks of action \(a_i\) can be appended to the sequence.
        The resulting sequence is executable from \(s_0\), respects \(\prec^+\), and uses the same number of tasks of each chain and action, except one more task with action \(a_i\). 
        Thus, \(R_{s'}[h_1,\ldots,h_w][r_1,\ldots,r_w]\) is indeed \True.

        In the other case, it sets the variable to \True starting from some \(R_{s'}[h_1,\ldots,h_{j-1},h_j-1,h_{j+1},h_w][r_1,\ldots,r_{i-1},r_i-1,r_{i+1},\ldots,r_{|E|}]\) being \True with \(j\in [w]\), \(i\in[|E|]\).
        Then, due to the induction hypothesis, there is a task sequence corresponding to an executable action sequence that uses the first \(h_{\ell'}\) tasks of each chain \(c_{\ell'}\), for each \(\ell'\in[w]\setminus \{j\}\), but only uses the first \(h_j-1\) tasks of the chain \(c_j\), and it consists of \(r_\ell\) tasks of action equivalence class \(r_\ell\), for each \(\ell\in [|E|]\setminus\set{i}\), and \(r_i-1\) tasks of action equivalence class \(e_i\). 
        Further, the \(h_j^\text{th}\) task of chain \(h_j\) has action equivalence class \(e_i\), transforms state \(s'\) into \(s\), and appending it to the sequence respects $\prec^+$.
        Thus, in this case it is also correct to set the variable to \True.
    
        For the other direction, suppose the variable is supposed to be \True and let $\bar{t}$ be a task sequence witnessing its correctness.
        Consider the variable corresponding to the state, and values of the $h_i$ and $r_j$ after executing all but the last element of $\bar{t}$. 
        As this shortened task sequence is a prefix of $\bar{t}$, its corresponding variable is supposed to be \True and the algorithm correctly identifies this by the induction hypothesis.
        When the algorithm sets this variable to be \True, in the following update steps it considers all possible next tasks in the sequence, including the last element of $\bar{t}$, and hence correctly sets the value of that variable to be \True.
        
        Computing the equivalence classes takes \bigO{|\I|^3} time.
        As there are \(|E|\le (k+1)^k\) equivalence classes, at most 
        \bigO{|T|^w \cdot k \cdot |T|^{(k+1)^k}} 
        variables are considered. 
        For each of these, computing all \(v_i\) takes \bigO{|T|} time by iterating through all chains and it can update at most \(w+|E|\) variables. 
        Further, it takes \bigO{|T|} time to check whether a variable yields a yes-instance.
        Hence, the algorithm takes 
        \(\bigO{|T|^{w+(k+1)^k} \cdot k \cdot (|T||+w+|E|) + |V|^{2.5}+|{\prec}|} 
        = \bigO{|T|^{w+(k+1)^k+1} \cdot k}\) time.
    \end{proof} 

    \begin{proof}[\textbf{Proof of \cref{thm:meta_theorem_lowerbound}}]
        We reduce from the \NP-hard \textsc{Multicolored-Clique} problem, which asks whether a given $k$-partite graph \(G=(V,E)\) contains a $k$-clique~\cite{FELLOWS2009}.
            Let \(tn(p,d,a)\) with \(p,d,a\subseteq F\) refer to a task network with a single task \(t\) with action \(\alpha(t)\) such that \(\pre(\alpha(t))=p\), \(\del(\alpha(t))=d\), and \(\add(\alpha(t))=a\).
            For each vertex \(v_i\), \(i\in |V|\), create a compound task \(V_i\), and, for each edge \(e_i\), \(i\in |E|\), create a compound task \(E_i\).  
            Create a primitive task \(t_g\) with action \(a_g\) such that \(\pre(a_g)=\set{\text{EDGE}_{j,j'}\mid j,j'\in [k], j < j'}\), \(\del(a_g)=\emptyset\), and \(\add(a_g)=\set{\text{GOAL}}\).
            The initial network \(tn\) contains exactly these tasks in a total order where \((V_i,V_{i+1})\in \prec\) for all \(i\in[|V|-1]\), \((V_{|V|}, E_1)\in \prec\), \((E_i,E_{i+1})\in \prec\) for all \(i\in[|E|-1]\), and \((E_{|E|}, t_g)\in \prec\).
            Each vertex and edge compound task can be decomposed into a primitive task with no effect, that is, 
            \((V_i,tn(\emptyset,\emptyset,\emptyset))\in M\) for all \(i\in [|V|]\), and 
            \((E_i,tn(\emptyset,\emptyset,\emptyset))\in M\) for all \(i\in [|E|]\).
            This corresponds to not selecting respective vertices and edges as part of the clique.
            Further, a vertex can be chosen for color \(j\) if no other vertex has been chosen for this color (represented by having \(\text{COLOR}_j\) in the state).
            Thus, \((V_i, tn(\set{\text{COLOR}_j}, \set{\text{COLOR}_j}, \set{v_i}))\in M\) for \(i\in [|V|]\) and the color \(j\) of the vertex \(v_i\).
            An edge can be chosen as the \(j^{\text{th}}\) edge in the clique if both its vertices are chosen.
            Thus, for all \(i\in [|E|]\), if \(e_i = (v,v')\), \(v\) has color \(j\), and \(v'\) has color \(j'\), then \((E_i, tn(\set{v, v'}, \emptyset, \set{\text{EDGE}_{j,j'}}))\in M\).
            The construction is concluded by setting the initial state \(s_0 = \set{\text{COLOR}_i\mid i\in [k]}\), setting \(S=\set{a_g}\) for \pcover, and setting \(s_g = \set{\text{GOAL}}\) for \ptarget.
            The construction takes polynomial time.
        
            Suppose there is a multicolored clique of size \(k\). 
            Decompose according to the above interpretation.
            We get a primitive network with a total order on the tasks.
            The corresponding action sequence is executable from \(s_0\) and executes all tasks including \(t_g\), witnessing a yes-instance for \pexist, \pcover, and \ptarget.
        
            For the other direction, suppose any one of the three HTN instances has a solution. 
            This implies that the task \(t_g\) can be executed.
            The precondition of \(t_g\) ensures that an edge is selected for each pair of colors.
            The preconditions on the edge tasks ensure that all the vertices incident to the selected edges have been selected.
            The preconditions on the vertex tasks ensure that at most 1 vertex of each color is selected.
            Thus, we have a yes-instance of \textsc{\(k\)-Multicolored-Clique}.
          
            Note that for the constructed instances, \(\compd = \comps = 1\) and \(\compc = 2\), proving the case of non-constant $\compn$.
            For non-constant $\comps$, we let the initial network contain a single compound task that is decomposed into a network with the above structure. Then, \(\compc=2\), \(\compd=2\), and \(\compn=1\).
            For non-constant $\compd$, we let the initial network only contain the compound task \(V_1\).
            Then, we adapt all the decomposition methods. 
            For each method \((t, tn_m)\in M\), the network \(tn_m\) consists of a single primitive task \(t_m\). 
            Let \(t'\) denote the task identifier that followed \(t\) in the total order in the initial network. 
            Instead of decomposing into a single primitive task \(t_m\), let the method decompose into a network with tasks \(t_m\) and \(t'\) with \((t_m, t')\in \prec^+\).
            Note that the exact same set of primitive networks can be created as in the original construction, but now \(\compd\) is non-constant, \(\compc=\comps = 2\), and \(\compn=1\).
            
            All initial task networks have (generalized) partial order width 1 or 0 as they consist of a chain or a single vertex.
        \end{proof}

 \begin{proof}[\textbf{Proof of \cref{thm:planverif_vertexcover}}]
    We have a yes-instance if the input plan \(p\) is executable in $s_0$ and each \(t\in T\) can be assigned a unique index in the sequence such that the assignment according to \(\alpha\) matches the plan and the imposed order does not conflict with \(\prec^+\).
    Observe that simulating the action sequence suffices to check whether all preconditions are met at the respective step, so it remains to handle the assignment of tasks to indices.
    We branch over all possible orderings of the vertices in \(V\).
    Orderings that do not respect an edge in \(\prec^+ \cap (V\times V)\) are discarded immediately.
    For \(i\in [|V|]\), let \(v_i\) denote the \(i^{\text{th}}\) task in the ordering. 
    Then, for all \(t\in T\setminus V\), let the priority of \(t\) be defined as \(\psi(t)=\min_{ (t,v_i)\in \prec} i\) and \(\psi(t)=\infty\) if there is no such edge. 
    For all \(v_i\in V\), let \(\psi(v_i)=i + 0.5\). 
    We now iterate through \(p\) and assign tasks to each index in a greedy manner.
    Assume we have already assigned tasks to the first \(i-1\) actions for some \(i\in [|T|]\).
    Then, we say that \(t\in T\) is executable as the \(i^{\text{th}}\) action if \(\alpha(t)=p[i]\), it is not yet assigned an index, and all its predecessors in \(\prec^+\) are assigned to indices less than~\(i\).
    Among all executable tasks at the index \(i\), we choose the one with the lowest priority according to \(\psi\).
    If there is no executable task for some index, then we return that this branch does not yield a yes-instance. 
    The algorithm returns that it is a yes-instance if any branch manages to assign all tasks to~\(p\), and rejects otherwise.\\
    \indent    
    If a branch assigns each task to an index in the plan, it is indeed a yes-instance as the algorithm only assigns executable tasks at each step.
    For the other direction, assume there is a bijective assignment \(a^*\colon T\rightarrow[|T|]\) such that, for all \(t\in T\), we have \(\alpha(t)=p[a^*(t)]\) and the corresponding ordering of the tasks according to \(a^*\) does not conflict with~\(\prec^+\). 
    Consider the branch in which the tasks in \(V\) are ordered according to~\(a^*\).
    As \(a^*\) is a valid assignment, this branch is not immediately discarded.
    Further, this branch successfully assigns a task to each index by the following inductive argument. 
    We claim that, for each \(0\le i\le |T|\), there is a valid assignment that assigns the indices 1 to \(i\) exactly like the algorithm does.
    As \(a^*\) is a valid assignment, this holds for \(i=0\).
    Suppose the claim holds for an arbitrary but fixed \(i\) and the algorithm is about to assign index \(i+1\). 
    Then, by the induction hypothesis, there is a valid assignment \(a_i^*\) that assigns all tasks up to \(i\) the same as the algorithm has. 
    Let \(t\) and \(t^*\) be the tasks that are assigned to \(i+1\) by the algorithm and \(a_i^*\), respectively.
    If \(t=t^*\), then the induction claim trivially holds for \(i+1\).
    Thus, assume that \(t\neq t^*\). 
    As \(a_i^*(t^*)=i+1\), \(t^*\) is executable when the algorithm assigns \(t\) to \(i+1\). 
    Since the algorithm chooses \(t\) instead, we have \(\psi(t)\le \psi(t^*)\).
    We show that \(a_{i+1}^*\), the assignment obtained from \(a_i^*\) by switching only \(t\) and \(t^*\), is a valid assignment.
    Note that the algorithm chooses \(t\) at index \(i+1\), and so, \(t\) is executable at this point. 
    It remains to show that there is no task \(t'\) such that \(a_i^*(t^*) < a_i^*(t') < a_i^*(t)\) and \((t^*,t')\in \prec^+\).
    Observe that when a task from \(V\) is executable, it is the only executable task with the lowest priority, and thus, since \(\psi(t)\le \psi(t^*)\), we get that \(t^*\notin V\).
    Hence, as \(V\) is a vertex cover, we only have to consider the case where \(t'\in V\).
	This implies that if \(\psi(t^*)=\infty\), then no conflicting edge \((t^*,t')\) exists, and so, we only have to consider the case where \(\psi(t^*)\neq \infty\), and thus, \(\psi(t)\neq \infty\).
    Since \(a_i^*\) is a valid assignment, either \(t\) is placed before \(v_{\psi(t)}\) or $t=v_{\psi(t)-0.5}$.
    Now, \(\psi(t)\le \psi(t^*)\) immediately yields that, by swapping \(t\) and \(t^*\), \(t^*\) is still placed before \(v_{\psi(t^*)}\), and hence, no conflicting edge \((t^*,t')\) exists.
    Thus, \(a_{i+1}^*\) is a valid assignment and the induction claim holds for \(i+1\).
    By the inductive argument, the algorithm finds a valid assignment if there is one, and hence, correctly solves the decision problem.\\
    \indent
    The algorithm branches over the \(|V|!\) possible orderings of tasks in the vertex cover.
    For each ordering, the priorities can be computed in \bigO{|{\prec}|+|T|} time. 
    Using a Hollow Heap \cite{Hansen_Kaplan_Tarjan_Zwick_2017} to implement a priority queue, iterating through the sequence, extracting the minimum priority executable task, and inserting all newly executable tasks to the queue takes \bigO{|{\prec}|+|T|\log |T|} time, yielding a runtime of \bigO{|V|!(|{\prec}|+|T|\log |T|)} for the entire algorithm.
\end{proof} 

\begin{proof}[\textbf{Proof of \cref{thm:reachtarget_includeaction_fpt}}]
    Let the input instance consist of the primitive network \(tn=(T, \prec^+, \alpha)\), an initial state \(s_0\), and either a target state \(s_g\) (for \ptarget) or an action set $S$ (for \pcover). For \pcover, assume that Reduction Rule R0 cannot be applied (as otherwise we can solve the instance directly). First use a fixed-parameter algorithm to compute a minimum-size vertex cover $V$ of \(G_\prec\)~\cite{DowneyFellows13}, and let $k$ denote the number of states in the state transition graph~\(\G\). 
    
    The algorithm proceeds analogously for both problems, so we provide only a single description while making explicit the few areas where the techniques diverge.
    The algorithm branches over all possibilities for the set of employed vertex cover tasks \(V'\subseteq V\), orders \(v_1,\ldots,v_{|V'|}\) of the tasks in \(V'\), and states \(s_1,\ldots,s_{|V'|}\) from which the tasks are executed, respectively. 
    This gives \bigO{2^{|V|}\cdot |V|! \cdot k^{|V|}} branches.

    For each branch, we define a {\it strong equivalence relation} on the tasks. For all \(t\in T\setminus V\), the \emph{admissible interval} is defined as \([\max\set{i \mid v_i \prec t},\min\set{i \mid t \prec v_i}-1]\).%
    \footnote{If the first set is empty, then the interval starts at 0. If the second set is empty, then it ends at \(|V'|\).} 
    Then, \(t_1,t_2\in T\) are strongly equivalent if and only if \(\alpha(t_1)\) and \(\alpha(t_2)\) are equivalent and have the same admissible interval.
    There are at most \( E = (k+1)^k \cdot (|V'|^2+1)\) strong equivalence classes.
    Define the augmented state transition graph \(\G'\) over the same set of vertices as the regular state transition graph, but let it have no edges for actions.
    Instead, let there be an edge with label \(e\) from $s$ to $s'$ for all strong equivalence classes \(e\) and states \(s,s'\) such that \(e\) transforms \(s\) into \(s'\).
    
    Then, split the branch for each of the possible \(|V'|+1\) simple edge paths in the augmented state transition graph \(\G'\) that start and end in the following vertices.
    A path \(W_0\) from \(s_0\) to \(s_1\),
    a path \(W_1\) from \(s_1\setminus \del(v_1)\cup \add(v_1)\) to \(s_2\), and so on.
    For \pcover, also branch on all possibilities for a path \(W_{|V'|}\) starting at \(s_{|V'|}\setminus \del(v_{|V'|})\cup \add(v_{|V'|})\) and ending at any state.
    For \ptarget, only consider paths \(W_{|V'|}\) that end in \(s_g\). 
    If one of these paths cannot be constructed as there is no directed path between the respective vertices, then discard the branch.
    This splits each branch into $\bigO{(2E)^{k(|V'|+1)}}$ new branches.

	Further, let \(C\) be the set of distinct simple edge cycles in \(\G\), where two cycles are distinct if and only if they do not contain the same vertices.
    Then, \(|C| \le (2E)^k\).
    For each \(0\le i\le |V'|\), we branch over all possible sets \(C_i\subseteq C\), the cycles that are walked as a detour from path \(W_i\).
    This splits each prior branch into
    \(\bigO{(2^{|C|})^{|V'|+1}} \leq \bigO{2^{(2E)^k \cdot (|V'|+1)}}\)
     new branches.
    Thus, the total number of branches is
    \[ 
        \mathcal{O}\Big(2^{|V|}\cdot |V|! \cdot k^{|V|}\cdot
         (2E)^{k(|V'|+1)}
         \cdot 
         2^{{(2E)^k \cdot (|V'|+1)}} \Big).
    \]

    Branches where there is a task \(v_i\) such that \(\pre(v_i)\nsubseteq s_i\) are discarded.
    Further, a branch is discarded if, for some \(0\le i\le |V'|\), there is an equivalence class in \(W_i\) or in a cycle \(c\in C_i\) such that \(i\) is outside the admissible interval of that class. 
    Branches where there is \(0\le i\le |V'|\) such that \(W_i\) and its associated cycles do not describe a connected component are discarded as well (as can be tested by the marking procedure described in the proof of \cref{thm:includeaction_fpt_empty}). 

    We build a solution plan that walks each of the paths \(W_i\), \(0\le i\le |V'|\), and additionally walks through the associated cycles. 
    The number of times a cycle is walked is determined by an ILP with a variable \(x_c\) for every \(c\in C\). 
    For each strong equivalence class \(e\), let \({\max}'(e)\) denote the number of tasks in \(T\setminus V\) in equivalence class \(e\).
    For all \(c\in C\), let \(c(e)\) denote the number of occurrences of \(e\) in \(c\).
    Further, let \(w_i(e)\) denote the number of occurrences of \(e\) in \(W_i\) for \(0\le i\le |V'|\).
    Note that, in any plan that is a solution to the problem instance, two strongly equivalent tasks can be swapped and the plan will still be valid.
    Thus, to ensure no task in \(T\setminus V\) is executed more than once, for each equivalence class \(e\) add a constraint
    \begin{align*}
        {\max}'(e) &\ge w_0(e)+\dots+w_{|V'|}(e)\\
         &\hphantom{\ge } 
         + c_1(e)x_{c_1} + \dots + c_{|C|}(e)x_{c_{|C|}},
    \end{align*}
    where \(c_1,\ldots,c_{|C|}\) are the elements of \(C\).
    Further, to ensure that each selected cycle is executed at least once for each path it is associated with, for all cycles \(c\in C\), add a constraint 
    \begin{equation*}
        x_{c} \ge |\set{0\le i \le |V'|\mid c \in C_i}|. 
    \end{equation*}
    For \pcover, for each action equivalence class \(e_a\), let \(\min'(e_a)\) be the number of occurrences of \(e_a\) in \(S\), and let \(w_i'(e_a)\) and \(c_i'(e_a)\) be the number occurrences of \(e_a\) in path \(W_i\) and cycle \(c_i\), respectively.
    Then, add a constraint 
    \begin{align*}
        {\min}'(e_a) &\ge w_0'(e_a)+\dots+w_{|V'|}'(e_a)\\
        &\hphantom{\ge } + c_1'(e_a)x_{c_1} + \dots + c_{|C|}'(e_a)x_{c_{|C|}}.
    \end{align*}
    For all instances, the algorithm decides that the given instance is a yes-instance if and only if there is a solution to the ILP for any non-discarded branch.

    If the ILP has a solution, then construct a sequence of equivalence classes as follows. 
    For each \(0\le i\le |V'|\), take any cycle \(c\in C_i\) that shares a vertex \(s\) with the current path \(W_i\).
    Delete \(c\) from \(C_i\) and augment \(W_i\) by inserting cycle \(c\) at that position once. 
    If there is no more \(0\le j\le |V|'\), \(j\neq i\), such that \(c\in C_j\), instead add as many repetitions of \(c\) to \(W_i\) as the number of times that cycle \(c\) was added to the path precisely \(x_c\) times in total.
    As the cycles end in the same state they started in, the resulting action sequence is executable.
    Repeat until \(C_i\) is empty.
    As \(C_i\) and \(W_i\) form a connected component, all cycles in \(C_i\) can be added this way.
    Construct a solution task sequence by concatenating the augmented paths \(W_0,\ldots, W_{|V'|}\) and, for a step by a certain equivalence class, employ any task of that class.
    For \pcover, choose tasks with a not yet covered action in \(S\) whenever possible.
    Right before each path \(W_i\),\( i\in [|V'|]\), execute task \(v_i\).
    Observe that this way the preconditions of each task are fulfilled when it is about to be executed.
    Further, branches in which there are tasks that are executed before their predecessors or after their successors in \(\prec^+\) are already discarded.
    The ILP ensures that each equivalence class~\(e\) appears at most \(\max{'}(e)\) times.
    As there are \(\max{'}(e)\) tasks of that class and they can be used interchangeably, the sequence is a valid construction.
    For \ptarget, only paths that end in \(s_g\) are considered.
    For \pcover, the additional set of constraints ensures that \(S\) is covered as there are enough tasks of each action and tasks from the same strong equivalence class can be used interchangeably.
    Thus, the given instance is a yes-instance.

    For the other direction, suppose there is a task sequence witnessing a yes-instance.
    There is a branch of the algorithm that considers exactly the set \(V'\) of vertex cover tasks used in the sequence in the order $v_1,\ldots,v_{|V|}$ and executes each task $v_i$ from state $s_i$.
    Split the sequence at the vertex cover tasks into subwalks \(W_0^*,\ldots,W_{|V|}^*\).
    Then, for all \(0\le i\le |V'|\), consider an initially empty set of cycles \(C^*_i\).
    Find any substring \(W_i^*\) that describes a simple edge cycle in \(\G'\).
    Add this cycle to \(C^*_i\) (disregarding direction and starting vertex) and remove the respective substring from \(W_i^*\).
    Repeat until no more such cycles exist.
    Observe that each \(W_i^*\) then describes a simple edge path in \(\G'\), and all \(C^*_i\) exclusively contain simple edge cycles in \(\G'\). 
    We write \(W_i'\) for these reduced \(W_i^*\).
    There is a branch of the algorithm such that \(W_i=W_i'\) and \(C_i = C_i^*\) for all \(0\le i\le |V'|\).
    This branch is not discarded because the task sequence is valid, so \(\pre(v_i)\subseteq s_i\) for all \(i\in[|V'|]\), and the paths \(W_i\) and cycles in \(C_i\) are such that they do not depend on any vertex cover task after \(v_i\), and no vertex cover task up to \(v_i\) depends on them.
    Further, as the cycles in \(C_i^*\) were iteratively removed from the path \(W_i^*\), the paths and cycles in \(W_i\) and \(C_i\) form a connected component in \(\G'\).
    For each cycle \(c\in C\), let \(x_c^*\) be the total number of times \(c\) has been removed from paths \(W_i^*\).
    Then, the ILP has a solution if \(x_c = x_c^*\) for all \(c\in C\) as follows.
    As the \(x_c^*\) are computed from a valid sequence, summing over all occurrences of an equivalence class \(e\) in paths \(W_i\) and cycles in \(C_i\) does not exceed \(\max{'}(e)\).
    Further, by construction, \(x_c^*\) is at least the number of sets \(C_i\) with \(c\in C_i\).
    In the case of an \pcover solution, the number of occurrences of each action equivalence class \(e_a\) is at least \({\min}'(e_a)\), 
    satisfying the additional constraints.
    Hence, the ILP has a solution and the algorithm correctly decides for a yes-instance.

    Applying reduction rule R0 and building the augmented state transition graph takes \bigO{|A|k+|T|} time.
    Deciding whether to discard a branch takes \bigO{|{\prec}|+|T|+|V||C|^2k} time.
    Computing \(c(e)\), \(w_i(e)\), \(\max'(e)\), \(c'(e_a)\), \(w_i'(e_a)\), \(\min'(e_a)\), and \(|\set{0\le i\le |V'|\mid c\in C_i}|\) takes $\mathcal{O}(|T|+ (|C|+|V|+1)k+ |C|\cdot(|V|+1))$ time.  
    The ILP has \(|C|\) variables and at most \(E+|C| + (k+1)^k = \bigO{E+|C|}\) constraints, in which the absolute values of all coefficients are bounded by \(\max\set{k,|T|}\).
    As noted before, an ILP instance $\I_{\text{ILP}}$ with \(n\) variables can be solved in \bigO{n^{O(n)}\cdot |\I_{\text{ILP}}|} time.
    Thus, the ILP is solved in 
    \begin{align*}
        \mathcal{O}\Big(|C|^{O(|C|)}\cdot |C|\cdot (E+|C|) \cdot \log(\max\set{(k,|T|)})\Big)
    \end{align*}
    time.
    As \(|C|\le (2E)^k\), the total runtime is at most
     \(f(k,|V|) \cdot |T|\)
    for some computable function \(f\).
    Multiplying the number of branches with the per-branch runtime yields a quantity that is at most
    $2^{k^{\mathcal{O}(k^2)} |V|^{\mathcal{O}(k)}} \cdot (|T|^2)$.
    \end{proof}

\fi

\end{document}